\DeclareMathOperator{\tr}{tr}
\DeclareMathOperator*{\essinf}{ess\,inf}
\DeclareMathOperator*\sgn{sgn}
\newcommand{\Et}[1]{E\left[\left.#1\,\right|\mathcal F_t\right]}
\numberwithin{equation}{section}
\theoremstyle{plain}
\newtheorem{theorem}{Theorem}[section]
\newtheorem{proposition}[theorem]{Proposition}
\newtheorem{corollary}[theorem]{Corollary}
\newtheorem{lemma}[theorem]{Lemma}
\theoremstyle{definition}
\newtheorem{definition}[theorem]{Definition}
\newtheorem{assumption}[theorem]{Assumption}
\theoremstyle{remark}
\newtheorem{remark}[theorem]{Remark}
\newtheorem{example}[theorem]{Example}
\def\thm@space@setup{%
  \thm@preskip=\parskip \thm@postskip=0pt
}
\title{Smooth Solutions to Portfolio Liquidation Problems\\ under Price-Sensitive Market Impact\thanks{We thank seminar participants at various institutions for valuable comments and suggestions. Financial support through the \textit{CRC 649 Economic Risk} and \textit{d-fine GmbH} is gratefully acknowledged. Parts of this manuscript were written while Horst was visiting the Hausdorff Research Institute for Mathematics in Bonn and the Center for Interdisciplinary Research at Bielefeld University; grateful acknowledgment is made for hospitality. We are indebted to two anonymous referees for their careful reading of the manuscript and their many suggestions which greatly improved the quality of this manuscript.}}
\author{
Paulwin Graewe\footnote{Department of Mathematics, Humboldt-Universit\"at zu Berlin, Unter den Linden 6, D-10099 Berlin, Germany, \texttt{graewe@math.hu-berlin.de}} \and
Ulrich Horst\footnote{Department of Mathematics and School of Business and Economics, Humboldt-Universit\"at zu Berlin, Unter den Linden 6, D-10099 Berlin, Germany, \texttt{horst@math.hu-berlin.de}} \and
Eric S\'{e}r\'{e}\footnote{Universit\'{e} Paris-Dauphine, PSL Research University, CNRS, UMR 7534, CEREMADE, 75016 Paris, France, \texttt{sere@ceremade.dauphine.fr}} 
}
\begin{document}

\maketitle

\begin{abstract}
We consider the stochastic control problem of a financial trader that needs to unwind a large asset portfolio within a short period of time. The trader can simultaneously submit active orders to a primary market and passive orders to a dark pool. Our framework is flexible enough to allow for price-dependent impact functions describing the trading costs in the primary market and price-dependent adverse selection costs associated with dark pool trading. We prove that the value function can be characterized in terms of the unique smooth solution to a PDE with singular terminal value, establish its explicit asymptotic behavior at the terminal time, and give the optimal trading strategy in feedback form.  
\end{abstract}

\textbf{AMS Subject Classification:} Primary 93E20; secondary 35Q93, 91G80

\textbf{Keywords:} stochastic optimal control, portfolio liquidation, singular terminal value

\section{Introduction}
	Traditional financial market models assume that asset prices follow an exogenous stochastic process and that all transactions can be settled without any impact on market prices. This assumption is appropriate for small investors who trade only a negligible proportion of the average daily trading volume. It is not always appropriate, though, for institutional investors trading large blocks of shares over a short time span. 

	The analysis of optimal liquidation problems has received considerable attention in the mathematical finance and stochastic control literature in recent years. Starting with the paper of  Almgren \& Chriss~\cite{AlmgrenChriss00} existence and uniqueness results of optimal liquidation strategies under various market regimes and price impact functions have been established by many authors, including \cite{AnkirchnerJeanblancKruse14,AnkirchnerKruse12,ChenKouWang15,Forsyth2012,GatheralSchied11,GraeweHorstQiu13,HorstNaujokat14,KloeckSchiedSun13,Kratz14,KratzSchoeneborn15,Schied13,SchiedSchoenebornTehranchi10}. One of the main characteristics of stochastic optimization problems arising in portfolio liquidation models is the singular terminal condition of the value function induced by the liquidation constraint. The singularity is already present, yet not immediately visible, in the original price impact model of Almgren \& Chriss~\cite{AlmgrenChriss00}. Within their mean variance framework and with arithmetic Brownian motion as the benchmark price process, the objective function is deterministic, and the optimization problem is essentially a classical variational problem where the terminal state constraint causes no further difficulties. However, when considering a geometric Brownian motion as the underlying price process as in Forsyth~et~al.~\cite{Forsyth2012}, the optimal execution strategies become price-sensitive. One is then faced with a genuine stochastic control problem where the singularity becomes a challenge when determining the value function and applying verification arguments. 

	Several approaches to overcome this challenge have recently been suggested in the stochastic control literature. Forsyth et al.~\cite{Forsyth2012} solve the control problem numerically by penalizing open positions at the final time.
	Ankirchner \& Kruse~\cite{AnkirchnerKruse12} characterize the value function of a Markovian liquidation problem as the unique viscosity solution to the Hamilton-Jacobi-Bellman (HJB) equation. Their verification argument uses a discrete approximation of the continuous time model. Ankirchner et al.~\cite{AnkirchnerJeanblancKruse14}, Graewe et al.~\cite{GraeweHorstQiu13}, and Horst et al.~\cite{HorstQiuZhang16} consider non-Markovian liquidation problems where the cost functional is driven by general adapted factor processes and the HJB equation solves a BSDE or BSPDE, depending on the dynamics of the factor processes. In all three cases, existence of solutions to the HJB equation is established by analyzing limits of sequences of BS(P)DEs with increasing finite terminal values while the verification argument uses generalized It\^{o}-Kunita formulas \cite{GraeweHorstQiu13}, resp., the link between degenerate BSPDEs and forward-backward stochastic differential equations \cite{HorstQiuZhang16}.

A general class of Markovian liquidation problems has been solved in Schied~\cite{Schied13} by means of Dawson--Watanabe superprocess. This approach avoids the use of HJB equations. Instead, it uses a probabilistic verification argument based on log-Laplace functionals of superprocesses that requires sharp upper and lower bounds for the candidate value function. 

	This paper establishes existence of a smooth solution to a class of Markovian portfolio liquidation problems. While existence of a weakly differentiable solution to our HJB equation can be inferred from \cite{GraeweHorstQiu13} and existence of optimal liquidation strategies can be inferred from, e.g., \cite{AnkirchnerJeanblancKruse14,GraeweHorstQiu13}, smooth solutions to stochastic portfolio liquidation problems have not yet been established in the literature before. As in~\cite{Schied13} the key is to know the precise asymptotic behavior of the value function at the terminal time. The asymptotics allows us to characterize the HJB equation in terms of a PDE with \textit{finite} terminal value yet singular nonlinearity, for which existence of a unique smooth solution can be proved using standard fixed point arguments in a suitable function space. 
	 
	As in \cite{GraeweHorstQiu13, HorstNaujokat14, KratzSchoeneborn15} we allow for simultaneous submission of active orders, i.e., orders for immediate execution at the best available price, to a primary venue, and of passive orders, i.e., orders for future execution, into a dark pool. Dark pools are alternative trading venues that allow investors to reduce market impact and hence trading costs by submitting liquidity that is shielded from public view. Trade execution is uncertain, though, as trades will be settled only if matching liquidity becomes available. 

	Active orders incur market impact costs while passive orders incur adverse selection (or ``slippage'') costs. In our model impact, adverse selection and risk costs are driven by a multi-dimensional time-homogeneous diffusion process. We think of the multi-dimensional diffusion process as describing the joint dynamics of a fundamental or benchmark stock price process and of a stochastic factor process driving the fluctuations in the available liquidity over time. Assuming that all cost terms are of the same order $p>1$ allows us to make the standard separation ansatz of expressing the value function as $v(t,y)|x|^p$ where $y$ is the state of the factor process and $x$ is the portfolio position. We establish sharp a priori estimates on $v$ at the terminal time. This is similar to~\cite{Schied13} and avoids the stochastic penalization method applied in \cite{AnkirchnerJeanblancKruse14, GraeweHorstQiu13, HorstQiuZhang16}. 
	
	For deterministic cost coefficients the asymptotics of $v$ can be inferred from, e.g., Seidman \& Yong~\cite{SeidmanYong96} who give the asymptotics of minimal $L^p$-norm controls of multi-dimensional linear systems with terminal state constraint. To the best of our knowledge a corresponding result for stochastic systems is not available in the literature.\footnote{The literature on minimal energy problems in stochastic settings (see \cite{Klamka13} and the references therein) seems to focus on controllability which is trivial in the present one-dimensional setting.} 
	
	Our a priori estimates allow us to prove the uniqueness of a continuous viscosity solution of polynomial growth to the HJB equation under continuity and polynomial growth conditions on the cost coefficients. The proof uses a comparison result for parabolic PDEs with possibly infinite terminal values. As a byproduct we obtain that the minimal nonnegative solution to the stochastic HJB equation in \cite{AnkirchnerJeanblancKruse14} is indeed the unique nonnegative solution to their singular BSDE if the coefficients are essentially bounded. 
	
	Our main contribution is the proof of the existence of a classical solution to the HJB equation under additional smoothness and boundedness conditions on the cost coefficients. The proof is based on the a priori estimates. They provide us with the precise asymptotic behavior of the solution at the terminal time and thus allow us to reduce the problem of solving the PDE with singular terminal value for $v$ to solving a PDE with finite terminal value, yet degenerate non-linearity. Using Krylov's generalized It\^o formula we prove that the classical solution is indeed the value function and that the optimal trading strategies can be given in closed form.  
		
	The remainder of this paper is organized as follows. The stochastic control problem is formulated in Section \ref{section-control}. The a priori estimates and the comparison principle that yields uniqueness of a continuous viscosity solution is established in Section 3. Existence and uniqueness of a classical solution to the HJB equation is proven in Section 4. The verification argument is carried out in Section 5. Finally, we show in Section 6 how our uniqueness result extends to the non-Markovian case analyzed in \cite{AnkirchnerJeanblancKruse14}. 

	\textit{Notational conventions.} We denote by $C(\mathbb R^d)$ the space of \textit{bounded} continuous functions. The functions in $u \in C_{poly}([0,T]\times\mathbb R^d)$ are continuous and for some $C>0$ and $n\geq 1$,
\begin{equation} \label{polynomial-growth-condition}
|u(t,y)|\leq C(1+|y|^n), \qquad (t,y)\in [0,T]\times\mathbb R^d.
\end{equation} 
The space $C^{1,2}_{loc}([0,T]\times\mathbb R^d)$ denotes the class of the functions $u(t,y)$ that are continuous (possibly unbounded) along with their first derivative in $t$ and their first and second derivative in~$y$. 
For generic $\alpha\in{(0,1)}$ and normed vector space~$E$, the functions in $C^{k+\alpha}([0,T];E)$ have $\alpha$-H\"older continuous derivatives up to the order $k$. By $W^{2}_{q,loc}(\mathbb{R}^d)$ we denote the usual Sobolev spaces of all functions that are locally $L^q$-integrable along with their weak first and second order derivative \cite[Definition~1.62]{AdamsFournier03}. The parabolic version $W^{1,2}_{q,loc}((0,T)\times \mathbb{R}^d)$ is the space of all functions $u(t,y)$ that are locally $L^q$-integrable along with their weak first derivative in $t$ and their weak first and second derivative in $y$. Whenever the notation $T^-$ appears in the definition of a function space we mean the set of all functions whose restrictions satisfy the respective property when $T^-$ is replaced by any $s<T$, e.g., 
\[
	C_{poly}([0,T^-]\times\mathbb R^d)=\{u:[0,T)\times\mathbb R^d\rightarrow \mathbb R: u_{|[0,s]\times\mathbb R^d}\in C_{poly}([0,s]\times\mathbb R^d) \text{ for all } s\in[0,T)\}
\]
and
\[
W^{1,2}_{q,loc}((0,T^-)\times\mathbb R^d)=\{u:(0,T)\times\mathbb R^d\rightarrow \mathbb R: u_{|(0,s)\times\mathbb R^d}\in W^{1,2}_{q,loc}((0,s)\times\mathbb R^d) \text{ for all } s\in(0,T)\}.
\]
The set of adapted $\mathbb R^d$-valued processes $(Z_t)_{t\in[0,T]}$ satisfying $E[\int_0^T|Z_t|^q\,dt]<\infty$ is denoted by $L^q_\mathcal F(0,T;\mathbb R^d)$; the subset of processes with continuous paths satisfying $E[\sup_{t\in[0,T]}|Z_t|^q]<\infty$ is denoted by $L^q_{\mathcal F}(\Omega;C([0,T];\mathbb R^d))$. If not otherwise indicated then $\|\cdot\|$ denotes the supremum norm. For arbitrary $\beta>0$ we occasionally write $\sqrt[\beta]{\,\cdot\,}$ instead of $(\,\cdot\,)^{1/\beta}$. All equations are to be understood in the a.s.\ sense.



\section{Problem formulation, assumptions and main results}\label{section-control}
	We consider the stochastic optimization problem of an investor that needs to close a (large) position of  shares within a given time interval $[0,T]$. Following Horst \& Naujokat~\cite{HorstNaujokat14} and Kratz \& Sch\"oneborn~\cite{KratzSchoeneborn15} the investor may trade in an absolutely continuous manner in a primary exchange and simultaneously place passive block orders into a dark pool. Execution of passive orders is modeled by the jump times of a Poisson process $(N_t)_{t\in[0,T]}$ with constant intensity $\theta\geq0$. 
	
	The Poisson process $N$ and an $n$-dimensional standard Brownian motion $(W_t)_{t\in[0,T]}$ are defined on a stochastic basis $(\Omega,\mathcal F,(\mathcal F_t)_{t\in[0,T]},\mathbb P)$ satisfying the usual conditions. In what follows we repeatedly use the independence of $N$ and $W$.
	
	As the factor process driving trading costs we consider the $d$-dimensional It\^o diffusion  
\begin{equation} \label{dynamic-price}
 Y_s^{t,y}=y+\int_t^sb(Y_r^{t,y})\,dr+\int_t^s\sigma(Y_r^{t,y})\,dW_r, \qquad t\leq s\leq T.
\end{equation}

\begin{assumption}\label{A1}
We assume throughout that the coefficients $b:\mathbb R^d\rightarrow \mathbb R^d$ and $\sigma:\mathbb R^d\rightarrow\mathbb R^{d\times n}$ are Lipschitz continuous.
\end{assumption}

The preceding assumption guarantees that \eqref{dynamic-price} admits a unique strong solution $(Y_s^{t,y})_{s\in[t,T]}$, for every initial state $(t,y)\in[0,T]\times\mathbb R^d$. Furthermore, see \cite[Theorem~3.35]{PardouxRascanu14}, the map $(s,t,y)\mapsto Y_s^{t,y}$ is a.s.\ continuous and for every $n\geq2$ there exists $C_n>0$ such that 
the following moment estimate holds with the convention $Y_s^{t,y}=y$ for $0\leq s\leq t$:
\begin{equation} \label{moment-estimate}
E[\sup\nolimits_{s\in[0,T]}|Y_s^{t,y}|^n]\leq C_n(1+|y|^n), \qquad (t,y)\in[0,T]\times\mathbb R^d.
\end{equation}
This moment estimate in particular guarantees by Vitali's convergence theorem that functions of the form 
\begin{equation} \label{Vitali}
	(t,y)\mapsto E\left[\int_t^Tf(s,Y_s^{t,y})\,ds\right],
\end{equation}
with $f\in C_{poly}([0,T]\times\mathbb R^d)$, belong again to $C_{poly}([0,T]\times\mathbb R^d)$.

\subsection{The stochastic control problem}
	For any initial time $t\in[0,T)$ and initial position $x\in\mathbb{R}$, we denote by $\mathcal A(t,x)$ the set of all admissible liquidation strategies $(\xi,\pi)$. Here, $\xi=(\xi_s)_{s\in[t,T]}$ describes the rates at which the agent trades in the primary market, while $\pi=(\pi_s)_{s\in[t,T]}$ describes the passive orders submitted to the dark pool. A pair of strategies $(\xi,\pi)$ is admissible if $\xi$ is progressively measurable and $\pi$ is predictable such that the resulting portfolio process 
\begin{equation*} 
X_s^{\xi,\pi}=x-\int_t^s\xi_r\,dr-\int_t^s\pi_r\,dN_r,  \qquad t\leq s\leq T,
\end{equation*}
satisfies the liquidation constraint 
\begin{equation}   \label{liquidation-constraint}
	X_{T}^{\xi,\pi}=0.
\end{equation}
The costs associated with an admissible liquidation strategy $(\xi,\pi)$ are modeled by the cost functional
\begin{equation*}
	J(t,y,x;\xi,\pi):=E\left[\int_t^T\eta(Y_s^{t,y}) |\xi_s|^p+\theta\gamma(Y_s^{t,y})|\pi_s|^p+\lambda(Y_s^{t,y})|X_s^{\xi,\pi}|^p\,ds\right]. 
\end{equation*}
The first term of the nonnegative running costs 
\begin{equation*}
c(y,x,\xi,\pi):=\eta(y)|\xi|^p+\theta\gamma(y)|\pi|^p+\lambda(y)|x|^p
\end{equation*}
describes the temporary price impact at the primary exchange; the second term describes adverse selection costs associated with dark pool trading (see \cite{HorstNaujokat14, KloeckSchiedSun13} for details) while the third term penalizes slow liquidation. The latter may be interpreted as the $p$-th power of the Value-at-Risk of the open position (see \cite{AnkirchnerKruse12, GatheralSchied11} for details). 

The value function of the stochastic control problem is defined for each initial state $(t,y,x)\in[0,T)\times\mathbb R^{d}\times\mathbb R$ as
\begin{equation} \label{value-function}
	V(t,y,x):=\inf_{(\xi,\pi)\in\mathcal A(t,x)}J(t,y,x;\xi,\pi).
\end{equation}

\begin{assumption}\label{A2}
We assume throughout that $p>1$\footnote{Unlike \cite{Schied13} we do not exclude exponents $1<p<2$, which correspond to root shaped temporary price impact. Almgren et al.\ \cite{Almgrenetal05} give empirical evidence for $p=8/5$.} and put $\beta:=1/(p-1)>0$. We further assume that the cost coefficients satisfy the following conditions: 
\begin{itemize}
	\item[(i)] The coefficients $\eta,\gamma,\lambda,1/\eta:\mathbb R^d\rightarrow \mathbb [0,\infty)$ are continuous.  
	\item[(ii)] The coefficients $\eta$, $\lambda, 1/\eta$ are of polynomial growth, i.e., for some $n\geq 1$ and $C>0$, 
\begin{equation} \label{polynomial-growth-coefficients}
	\eta(y)+\lambda(y)+1/\eta(y)\leq C(1+|y|^n), \qquad y\in \mathbb R^d.
\end{equation}
\end{itemize}
\end{assumption}

\begin{example}	
Our assumptions on the factor process allow us to capture simultaneously several key determinants of trading costs. The assumptions are satisfied for the arithmetic Brownian motion model $$dY^1_t = \mu\,dt + \sigma\, dW^1_t$$ as well as for a mean-reverting process of the form $$dY^2_t = f(\nu - Y^2_t)\,dt + dW^2_t$$ for a bounded Lipschitz continuous function $f$. The (logarithmic) price process $Y^1$ may drive the market risk factor $\lambda$ while $Y^2$ may describe stochastic order book heights (stochastic liquidity) and hence drive $\eta$. 
\end{example}

\begin{assumption}
In order to state our main result (Theorem \ref{thm-existence-GHS}), we will need the following additional assumptions: 
\begin{itemize}
	\item[(A1)] $\sigma\sigma^\ast$ is uniformly positive definite.
	\item[(A2)] $b$ and $\sigma$ are bounded.
	\item[(A3)] $\eta$, $1/\eta$, and $\lambda$ are bounded. In particular, $\eta\geq \kappa_0$ for some constant $\kappa_0>0$.
	\item[(A4)] $\eta$ is twice continuously differentiable, and $\mathcal L\eta$ is bounded. 
\end{itemize}
\end{assumption}

\par\medskip

\begin{remark}
Condition (A1) and (A2) provide, in particular, the fact that
\begin{equation*} 
	\left\{
	\begin{aligned}
	& D(\mathcal L)=\{u\in\bigcap\nolimits_{q\geq 1}W_{q,loc}^{2}(\mathbb R^d):u, \mathcal L u\in C(\mathbb R^d)\}\\
	&\mathcal L:D(\mathcal L)\subset C(\mathbb R^d)\rightarrow C(\mathbb R^d)
	\end{aligned}
	\right.
\end{equation*}
is a sectorial realization of the operator $\mathcal L$ in $C(\mathbb R^d)$ and hence that $\mathcal L$ generates an analytic semigroup in $C(\mathbb R^d)$, see \cite[Corollary~3.1.9]{Lunardi95}. If $d=1$, then $D(\mathcal L) = C^2(\mathbb{R})$. We use this fact in order to prove the existence of classical solutions. At the expense of additional work, it might be possible to relax (A1) and (A2), e.g., by the assumption that $b$ and $\sigma$ have a continuous and bounded second derivative (cf.~Remark~\ref{remark-pi-semigroup}), to incorporate the Ornstein-Uhlenbeck process as well as the geometric Brownian motion. We leave this as an open problem.
\end{remark}

\subsection{Heuristics and the main result}
The dynamic programing principle suggests that the value function satisfies the following HJB equation, cf.~\cite[Theorem~2.2]{ShiWu11}: 
\begin{equation} \label{hjb}
-\partial_t V(t,y,x)-\mathcal L V(t,y,x)-\inf_{\xi,\pi\in\mathbb R}	H(t,y,x,\xi,\pi,V)=0, \qquad (t,y,x)\in[0,T)\times\mathbb R^d\times\mathbb R,
\end{equation}
where
\[ 
	\mathcal L:=\frac{1}{2}\tr(\sigma\sigma^* D_y^2)+\left\langle b,D_y\right\rangle 
\]
denotes the infinitesimal generator of the factor process, and the Hamiltonian $H$ is given by
\[
	H(t,y,x,\xi,\pi,V):=-\xi \partial_xV(t,y,x)+\theta(V(t,y,x-\pi)-V(t,y,x))+c(y,x,\xi,\pi).
\]

The specific structure of our control problem with respect to the state variable~$x$ -- linear in the control dynamics and of $p$-th power in the running costs -- suggests an ansatz of the form:
\begin{equation} \label{value-function-structure}
 V(t,y,x)=v(t,y)|x|^p.
\end{equation}
Recalling that $\beta=1/(p-1)$, the proof of Lemma \ref{lemma-hjb} below is in fact standard. Before stating the lemma we formulate the different solution concepts for parabolic equations used in this paper. 

\begin{definition}
	For continuous functions $v:[0,T)\times\mathbb R^d\rightarrow \mathbb R$ we use the following solution concepts to parabolic PDEs
\begin{equation} \label{general-pde}
-\partial_t v(t,y)-H(t,y,v(t,y),D_yv(t,y),D^2_yv(t,y))=0,
\end{equation}
where $H:[0,T)\times\mathbb R^d\times\mathbb R\times\mathbb R^d\times\mathbb S^d\rightarrow \mathbb R$ and $\mathbb S^d$ denotes the set of symmetric $d\times d$ matrices.
\begin{itemize}
	\item[(i)] $v$ is a \textit{classical solution} if $v\in C^{1,2}_{loc}([0,T)\times\mathbb R^d)$ such that \eqref{general-pde} is satisfied for all $(t,y)\in[0,T)\times\mathbb R^d$.
	\item[(ii)] $v$ is a \textit{strong solution} if $v\in W^{1,2}_{1,loc}((0,T)\times\mathbb R^d)$ such that \eqref{general-pde} is satisfied in terms of the weak derivatives of $v$ a.e.\ in $[0,T)\times\mathbb R^d$.
	\item[(iii)] $v$ is a \textit{viscosity subsolution} if for every $\varphi\in C^{1,2}_{loc}([0,T)\times \mathbb R^d)$ such that $\varphi\geq v$ and $\varphi(t,y)=v(t,y)$ at a point $(t,y)\in[0,T)\times\mathbb R^d$ it holds \[-\partial_t \varphi(t,y)-H(t,y,v(t,y),D_y\varphi(t,y),D^2_y\varphi(t,y))\leq0.\]
	\item[(iv)] $v$ is a \textit{viscosity supersolution} if for every $\varphi\in C^{1,2}_{loc}([0,T)\times \mathbb R^d)$ such that $\varphi\leq v$ and $\varphi(t,y)=v(t,y)$ at a point $(t,y)\in[0,T)\times\mathbb R^d$ it holds \[-\partial_t \varphi(t,y)-H(t,y,v(t,y),D_y\varphi(t,y),D^2_y\varphi(t,y))\geq0.\]
	\item[(v)] $v$ is a \textit{viscosity solution} if $v$ is both viscosity sub- and supersolution.
\end{itemize}
\end{definition}

We recall the well-known fact that classical solutions of parabolic PDEs are also viscosity solutions, see, e.g., \cite{CrandallIshiiLions92}.

\begin{lemma} \label{lemma-hjb}
	A nonnegative function $v:[0,T)\times\mathbb R^d\rightarrow[0,\infty)$ is a classical/strong/viscosity (sub-/super-)solution to
\begin{equation} \label{inflator-pde}
-\partial_tv(t,y)-\mathcal Lv(t,y)-F(y,v(t,y))=0,
\end{equation} 
where 
\begin{equation} \label{nonlinearity}
	F(y,v):= \lambda(y)-\frac{|v|^{\beta+1}}{\beta\eta(y)^\beta}+\frac{\theta\gamma(y)v}{\sqrt[\beta]{\gamma(y)^\beta+|v|^\beta}}-\theta v,
\end{equation}
if and only if $v(t,y)|x|^p$ is a  classical/strong/viscosity (sub-/super-)solution to the HJB equation~\eqref{hjb}.  In this case the infimum in~\eqref{hjb} is attained at 
\begin{equation} \label{optimal-control-hjb} 
\xi^*(t,y,x)=\frac{v(t,y)^\beta}{\eta(y)^\beta}x \quad \text{ and } \quad \pi^*(t,y,x)=\frac{v(t,y)^\beta}{\gamma(y)^\beta+v(t,y)^\beta}x
\end{equation}
and
\begin{equation} \label{optimal-hamiltonian}
H(t,y,x,\xi^*(t,y,x),\pi^*(t,y,x),v(\cdot,\cdot)|\cdot|^p)=F(y,v(t,y))|x|^p.
\end{equation}
\end{lemma}
\begin{proof}
	When testing the viscosity property of $V=v(\cdot,\cdot)|\cdot|^p$ at a point $(\bar t, \bar y,\bar x)\in [0,T)\times\mathbb R^d\times\mathbb R$ it is sufficient to consider test functions of the form 
\begin{equation} \label{test-function}	
	\varphi = \phi(\cdot,\cdot)|\cdot|^p. 
\end{equation}	
In fact, let $\varphi\leq V$ (the case $\varphi\geq V$ is similar) be an \textit{arbitrary} test function such that $\varphi(\bar t,\bar y,\bar x)= V(\bar t,\bar y,\bar x)$. 	
	If $\bar x\neq 0$, we may define  
\[
	\tilde\varphi(t,y,x)=\frac{\varphi(t,y,\bar x)}{|\bar x|^p}|x|^p.
\]
Then, $\tilde\varphi \leq V$,  $\tilde\varphi(\bar t,\bar y,\bar x)=V(\bar t,\bar y,\bar x)$, $\partial_t\tilde\varphi(\bar t,\bar y,\bar x)=\partial_t\varphi(\bar t,\bar y,\bar x)$, $\mathcal  L\tilde\varphi(\bar t,\bar y,\bar x)=\mathcal L\varphi(\bar t,\bar y,\bar x)$. Since $V$ is continuously differentiable in $x$ and $(\bar t,\bar y,\bar x)$ is a extreme point of both $V-\tilde\varphi$ and $V-\varphi$ we furthermore have that $\partial_x\tilde\varphi(\bar t,\bar y,\bar x)=\partial_xV(\bar t,\bar y,\bar x)=\partial_x\varphi(\bar t,\bar y,\bar x)$. Testing at a point with $\bar x=0$ is trivial because $V(t,y,0)\equiv0$ and the derivatives $\partial_tV(t,y,0)$ and $\mathcal LV(t,y,0)$ exist and are identically equal to zero. As a result, we may w.l.o.g.~restrict ourselves to test functions of the form \eqref{test-function}. With this observation at hand the statement is then verified by straightforward calculations using that for $V=v(\cdot,\cdot)|\cdot|^p$ with $v\geq0$ the Hamiltonian $H$ is convex in $\xi$ and~$\pi$.  
\end{proof}

To guarantee the uniqueness of a viscosity solution to \eqref{inflator-pde} we need to impose a suitable terminal condition. Due to the liquidation constraint~\eqref{liquidation-constraint}, we expect the value function to tend to infinity for any fixed non-trivial portfolio position as $t\rightarrow T$. More precisely, when disregarding any adverse selection and risk costs, as well as any scenarios in which passive orders are executed, and using $\mathbb P(\text{no jumps of $N$ in $[t,T]$})=e^{-\theta(T-t)}$, one obtains for any admissible control $(\xi,\pi)\in\mathcal A(t,x)$,
\begin{equation} \label{no-jumps}
J(t,y,x;\xi,\pi)\geq e^{-\theta(T-t)} E\left[\left.\int_t^T\eta(Y_s^{t,y})|\xi_s|^p\,ds\right|\text{no jumps of $N$ in } [t,T]\right].
\end{equation}
Applying the reverse H\"older inequality to the inner integral of the RHS of \eqref{no-jumps} yields, 
\begin{align*}
	J(t,y,x;\xi,\pi)&\geq e^{-\theta(T-t)}E\left[\left.\left(\int_t^T\eta(Y_s^{t,y})^{\frac{-1}{p-1}}\,ds\right)^{\!-(p-1)}\left(\int_t^T|\xi_s|\,ds\right)^p\right|\text{no jumps of $N$ in } [t,T]\right].
\end{align*}
Given that no jumps occur, the liquidation constraint~\eqref{liquidation-constraint} yields $\int_t^T \xi_s = x$. Using the independence of $Y^{t,y}$ and $N$ we hence obtain, 
\begin{align*}
	J(t,y,x;\xi,\pi)	&\geq  e^{-\theta(T-t)}E\left[\left.\left(\int_t^T\eta(Y_s^{t,y})^{\frac{-1}{p-1}}\,ds\right)^{-(p-1)}\left|\int_t^T\xi_s\,ds\right|^p\right|\text{no jumps of $N$ in } [t,T]\right]\\
	&= E\left[\frac{e^{-\theta(T-t)}}{\sqrt[\beta]{\int_t^T\frac{1}{\eta(Y_s^{t,y})^\beta}\,ds}}\right]|x|^p.
\end{align*}
In view of \eqref{polynomial-growth-coefficients} and \eqref{moment-estimate}, we therefore expect that
\begin{equation*} 
\lim_{t\rightarrow T}v(t,y)= +\infty \quad \text{ locally uniformly on $\mathbb R^d$.}
\end{equation*}
It turns out that this singular terminal condition along with the standing assumptions on the diffusion and cost coefficients already ensures uniqueness of a viscosity solution to the HJB equation. The deterministic closing strategy that liquidates at a constant rate and uses no dark pool incurs the cost 
\begin{equation} \label{upper-bound}
\begin{split}
	J\Big(t,y,x;\frac{x}{T-t},0 \Big) & = \frac{1}{(T-t)^p} E\left[\int_t^T\eta(Y_s^{t,y}) + (T-s)^p \lambda(Y_s^{t,y}) \,ds \right] |x|^p \\
	& \leq \frac{C}{(T-t)^{p-1}}  (1 + |y|^n ) |x|^p,
\end{split}
\end{equation}	
due to estimate (\ref{moment-estimate}) and Assumption~\ref{A2}. As a result, we also expect $v(t,\cdot)$ to satisfy a polynomial growth condition. More precisely, we have the following result; its proof is given in Section~\ref{a-priori}. 

\begin{proposition} \label{thm-uniqueness}  \leavevmode Under Assumptions~\ref{A1} and~\ref{A2} the singular terminal value problem
\begin{equation} \label{pde-v}
	\left\{\begin{aligned}	&{-\partial_t v}(t,y)-\mathcal L v(t,y)- F(y,v(t,y))=0,    & (t,y)\in[0,T)\times\mathbb R^d,&\\
&\lim_{t\rightarrow T}v(t,y)=+\infty  & \text{locally uniformly on $\mathbb R^d$},&
\end{aligned}\right.
\end{equation}
with the nonlinearity $F$ given in~\eqref{nonlinearity} admits at most one nonnegative viscosity solution in \[
	C_{poly}([0,T^-]\times\mathbb R^d).
\] 
If such a viscosity solution exists, then it satisfies the following a priori estimates for $(t,y)\in[0,T)\times\mathbb R^d$:
\begin{equation} \label{a-priori-bounds}
	E\left[\frac{e^{-\theta(T-t)}}{\sqrt[\beta]{\int_t^T\frac{1}{\eta(Y_s^{t,y})^\beta}\,ds}}\right]\leq v(t,y)\leq \frac{1}{(T-t)^p}E\left[\int_t^T\eta(Y_s^{t,y})+(T-s)^p\lambda(Y_s^{t,y})\,ds\right].
\end{equation}
\end{proposition}

Our main contribution is the proof of the existence of a classical solution to the singular terminal value problem \eqref{pde-v}. This is achieved under assumptions (A1)--(A4).
The boundedness away from zero of $\eta$, along with the existence and boundedness of $\mathcal L\eta$ will for instance be used to derive the precise asymptotic behavior of the solution near the terminal time. The main result of this paper is:

\begin{theorem} \label{thm-existence-GHS}  \leavevmode 
Under Assumptions~\ref{A1} and~\ref{A2} and the conditions (A1)--(A4), the singular terminal value problem \eqref{pde-v} admits a nonnegative classical solution
\[
	v \in C^\alpha([0,T^-];D(\mathcal L))\cap C^{1+\alpha}([0,T^-];C(\mathbb R^d)).
\]
\end{theorem}

The proof of this theorem will be carried out in Section 4 below.\medskip

Our last result is a verification theorem; its proof is given in Section~\ref{sec-verification}. The singularity at the terminal time prevents a straightforward application of the standard verification arguments. Instead, we first use Krylov's generalized It\^o formula~\cite[Theorem~2.10.1, p.~122]{Krylov80} to establish optimality away from the terminal time and then use the a priori estimate~\eqref{a-priori-bounds} to prove optimality on the whole time interval.  Here we require that the strong solution is locally $L^q$-integrable along with its weak derivatives for some $q>d+2$ to guarantee that the parabolic Sobolev embedding theorem~\cite[Lemma~II.3.3]{LSU68} applies.

\begin{proposition} \label{thm-verifcation}
Under Assumptions~\ref{A1} and~\ref{A2} and the conditions (A1) and (A3), for some $q>d+2$, let
\[
v\in W^{1,2}_{q,loc}((0,T^-)\times\mathbb R^d)\cap C_{poly}([0,T^-]\times\mathbb R^d)
\]
be a nonnegative strong solution to \eqref{pde-v}.  Then, the value function of the control problem \eqref{value-function} is given by $V(t,y,x)=v(t,y)|x|^p$, and the optimal control $(\xi^*,\pi^*)$ is given in feedback form by
\begin{equation} \label{optimal-control}
		\xi_s^*= \frac{v(s,Y_s^{t,y})^\beta}{\eta(Y_s^{t,y})^\beta}X_s^* \quad \text{ and } \quad \pi_s^*=\frac{v(s,Y_s^{t,y})^\beta}{\gamma(Y_s^{t,y})^\beta+v(s,Y_s^{t,y})^\beta}X_{s-}^*.
\end{equation}
In particular, the resulting optimal portfolio process $(X^*_s)_{s\in[t,T]}$ is given by
\begin{equation} \label{optimal-position-process}	X_s^*=x\exp\left(-\int_t^s\frac{v(r,Y_r^{t,y})^\beta}{\eta(Y_r^{t,y})^\beta}\,dr\right)\prod_{t<r\leq s}^{\Delta N_r\neq 0}\left(1-\frac{v(t,Y_r^{t,y})^\beta}{\gamma(Y_r^{t,y})^\beta+v(t,Y_r^{t,y})^\beta}\right).
\end{equation}
\end{proposition}

Since $C^\alpha([0,T^-];D(\mathcal L))\cap C^{1+\alpha}([0,T^-];C(\mathbb R^d)) \subset W^{1,2}_{q,loc}((0,T^-)\times\mathbb R^d)\cap C_{poly}([0,T^-]\times\mathbb R^d)$  we immediately obtain the following corollary to Theorem \ref{thm-existence-GHS} and Proposition \ref{thm-verifcation}. 

\begin{corollary}
	Assume that the assumptions of Theorem \ref{thm-existence-GHS} hold and let $v$ be the unique classical solution to the singular terminal value problem \eqref{pde-v}. Then the optimal liquidation strategy is given in feedback form by (\ref{optimal-control}).  
\end{corollary}  


\section{Comparison principle and a priori estimates} \label{a-priori}

In this section we prove Proposition \ref{thm-uniqueness}. The proof is based on the following comparison principle that is itself a consequence of the comparison principle given in the Appendix for viscosity sub- and supersolutions to parabolic equations with \textit{finite} terminal values and monotone nonlinearities.  

\begin{lemma} \label{lemma-comparison principle}
	Let $\underline v,\overline v\in C_{poly}([0,T^-]\times\mathbb R^d)$ be a nonnegative viscosity sub- and a nonnegative viscosity supersolution to~\eqref{pde-v}, respectively, such that
\begin{equation*}
	\lim_{t\rightarrow T}\overline v(t,y)=+\infty \quad \text{locally uniformly on $\mathbb R^d$.}
\end{equation*}
Then, 
\[
	\underline v\leq\overline v \qquad \text{in} \quad [0,T)\times\mathbb R^d.
\] 
In particular, there exists at most one nonnegative viscosity solution in $C_{poly}([0,T^-]\times\mathbb R^d)$ to problem~\eqref{pde-v}.
\end{lemma}

\begin{proof}
	Due to the time-homogeneity of the PDE in~\eqref{pde-v}, viscosity (super-/sub-)solutions stay viscosity (super-/sub-)solutions when shifted in time. The idea is therefore to separate the singularities to have finite values to compare. 
	
More precisely, we define, for any $\delta>0$, the difference function $w:[0,T-\delta)\times\mathbb R^d\rightarrow \mathbb R$ by
\[
	w(t,y)= \overline v(t+\delta,y) -\underline v(t,y).
\]
A direct computation shows that $v\mapsto F(\cdot,v)$ is decreasing on $[0,\infty)$. In fact, both $\partial_v F$ and $\partial_v^2 F$ are nonpositive on $\mathbb R^d\times[0,\infty)$. Hence, by Lemma~\ref{lemma-linearization}, $w$ is on $[0,T-\delta)\times\mathbb R^d$ a viscosity supersolution to 
\begin{equation*}
-w_t(t,y)-\mathcal Lw(t,y)-l(t,y)w(t,y)=0,
\end{equation*}
where 
\[
	l(t,y):=1_{\overline v(t+\delta,y)\neq\underline v(t,y)}\frac{F(y,\overline v(t+\delta,y))-F(y,\underline v(t,y))}{\overline v(t+\delta,y)-\underline	v(t,y)}.
\]
By the first order Taylor approximation of $F$ in $v$ at $\overline v(t+\delta,y)$, along with $\partial_v F\leq 0$ and $\partial_v^2F\leq 0$, we obtain that
\[
	-l(t,y)w(t,y)\leq-\partial_v F(y,\overline v(t+\delta,y))w(t,y).
\]
In terms of the continuous coefficient
\[
	\tilde l(t,y):=\partial_v F(y,\overline v(t+\delta,y))\leq 0,
\]
it follows that $w$ is on $[0,T-\delta)\times\mathbb R^d$ also a viscosity supersolution to the equation
\begin{equation} \label{diff-pde}
	-\partial_t w(t,y)-\mathcal Lw(t,y)-\tilde l(t,y) w(t,y)=0,
\end{equation}
for which the assumptions of Theorem~\ref{theorem-finite-comparison} hold with $\mu=0$. Hence, for all $0<t\leq s<T-\delta$ and $y\in\mathbb R^d$ Theorem~\ref{theorem-finite-comparison} yields
\begin{equation} \label{std-comparison}
w(t,y)\geq E\left[w(s,Y_s^{t,y})\exp\left(\int_t^s \tilde l(r,Y_r^{t,y})\,dr\right)\right],
\end{equation}
where the RHS is the Feynman-Kac viscosity solution \cite[Theorem~3.2]{Pardoux99} to~\eqref{diff-pde} on $[0,s]\times\mathbb R^d$ with terminal value $w(s,\cdot)$. 

Since $\overline{v}\geq 0$, $\tilde l \leq 0$, and $E[\sup_{s\in[t,T-\delta]} \underline v(s,Y_s^{t,y})]<\infty$ due to $\underline v\in C_{poly}([0,T-\delta]\times\mathbb R^d)$, we can apply Fatou's lemma to the expectation in~\eqref{std-comparison} as $s\rightarrow T-\delta$ to obtain
\begin{equation} \label{123}
w(t,y)\geq E\left[\liminf_{s\rightarrow T-\delta}w(s,Y_s^{t,y})\exp\left(\int_t^s \tilde l(r,Y_r^{t,y})\,dr\right)\right].
\end{equation}
The sample paths of $(Y_s^{t,y})_{s\in[t,T]}$ are bounded a.\,s., and hence,
\[
	\lim_{s\rightarrow T-\delta} w(s,Y_s^{t,y})=\lim_{s\rightarrow T-\delta} \overline v(s+\delta,Y_s^{t,y})-\underline v(s,Y_s^{t,y})=+\infty \quad \mbox{a.s.} 
\]	
because $\lim_{s\rightarrow T-\delta} \overline v(s+\delta, \cdot) = + \infty$ uniformly on compact sets and $\underline v \in C_{poly}([0,T-\delta]\times\mathbb R^d)$. The limit inferior in~\eqref{123} is therefore a.s.\ nonnegative. Hence,
\[\overline v(t+\delta,y)-\underline v(t,y)\geq 0.\] Finally, by letting $\delta\rightarrow 0$ we conclude $\overline v-\underline v\geq0$ on $[0,T)\times\mathbb R^d$ by continuity of $\overline v$.
\end{proof}

	 The comparison principle establishes the uniqueness statement in Proposition \ref{thm-uniqueness}. It also allows us to establish the a priori estimates~\eqref{a-priori-bounds}.
	 
\begin{proof}[Proof of Proposition~\ref{thm-uniqueness}]
	We show that the lower (upper) estimate in \eqref{a-priori-bounds}---denoted in the following by $\underline v$ (respectively, $\overline v$)---is a viscosity subsolution (supersolution) to~\eqref{pde-v}. The assertion then follows from the comparison principle established in the preceding lemma. 

	First, note that $\underline v\in C_{poly}([0,T^-]\times\mathbb R^d)$. In fact, by Jensen's inequality 
\begin{align*}
	0\leq \underline v(t,y)&=\frac{e^{-\theta(T- t)}}{(T-t)^{1/\beta}}E\left[\left(\frac{1}{T-t}\int_t^T\frac{1}{\eta(Y_s^{t,y})^{\beta}}\,ds\right)^{-1/\beta}\right]\\
	&\leq \frac{e^{-\theta(T-t)}}{(T-t)^{(\beta+1)/\beta}}E\left[\int_t^T\eta(Y_s^{t,y})\,ds\right],
\end{align*}
and hence the polynomial growth of $\underline v$ in $y$ uniformly away from the terminal time follows from~\eqref{moment-estimate} and the polynomial growth of $\eta$. The continuity along with the polynomial growth of $1/\eta$ guarantees continuity of  $\underline v$, due to Vitali's convergence theorem as pointed out in  (\ref{Vitali}).
%

	To establish the subsolution property of $\underline v$, let $\varphi\geq \underline v$ be a smooth test function on $[0,T)\times\mathbb R^d$ such that $\varphi(t,y)=\underline v(t,y)$ for some $(t,y)\in[0,T)\times\mathbb R^d$. Moreover, we define the stopping time $\tau=\inf\{s\in[t,T]:|Y_s^{t,y}-y|\geq 1\}$. By uniqueness $Y_r^{t,y}=Y_r^{s,Y_s^{t,y}}$ for all $t\leq s\leq r\leq T$. Hence, by the definition of $\underline v$ and the Markov property of $Y^{t,y}$,
\[
	\underline v(s\wedge\tau,Y_{s\wedge\tau}^{t,y})=E\left[\left.\frac{ e^{-\theta(T- (s\wedge\tau))}}{\sqrt[\beta]{\int_{s\wedge\tau}^T\eta(Y_r^{t,y})^{-\beta}\,dr}}\right|\mathcal F_{s\wedge\tau}\right].
\]
Because $\varphi\geq \underline v$, $\varphi(t,y)=\underline v(t,y)$, and the tower rule, it holds for $t<s<T$,
\begin{align*}
	0&\leq E[\varphi(s\wedge\tau,Y_{s\wedge\tau}^{ t, y})-\underline v(s\wedge\tau,Y_{s\wedge\tau}^{t,y})]\\
	&= E[\varphi(s\wedge\tau,Y_{s\wedge\tau}^{t,y})-\varphi(t,y)] -
	E\left[\frac{e^{-\theta(T-( s\wedge\tau))}}{\sqrt[\beta]{\int_{s\wedge\tau}^T\eta(Y_r^{t,y})^{-\beta}\,dr}}\right]+E\left[\frac{e^{-\theta(T- t)}}{\sqrt[\beta]{\int_t^T\eta(Y_r^{t,y})^{-\beta}\,dr}}\right].
\end{align*}
Dividing by $s-t$, using It\^o's formula and the integral representation of increments of the function  
$r \mapsto \frac{e^{-\theta(T-r)}}{\sqrt[\beta]{\int_{r}^T\eta(Y_u^{t,y})^{-\beta}\,du}}$, 
and noticing that the stochastic integral being stopped at $\tau$ is a true martingale, we obtain
\begin{multline*} 
0\leq  E\left[\frac{1}{s-t}\int_t^{s\wedge\tau}\partial_t\varphi(r,Y_r^{t,y})+\mathcal L\varphi(r,Y_r^{t,y})\,dr\right]\\
-E\Bigg[\frac{1}{s-t}\int_t^{s\wedge\tau} \frac{1}{\beta\eta(Y_r^{t,y})^\beta} \frac{e^{-\theta(T-r)}}{\left(\int_{r}^T\eta(Y_u^{t,y})^{-\beta}\,du\right)^{(\beta+1)/\beta}}\,dr \Bigg]\\
-E\Bigg[\frac{1}{s-t}\int_t^{s\wedge\tau}\frac{\theta e^{-\theta(T- t)}}{\sqrt[\beta]{\int_r^T\eta(Y_u^{t,y})^{-\beta}\,du}}\,dr\Bigg].
\end{multline*}
By Jensen's inequality, 
for $r\in[t,s]$,
\[
	\sqrt[\beta]{\frac{1}{\int_r^T\eta(Y_u^{t,y})^{-\beta}\,du}} \leq 
	\sqrt[\beta]{\frac{1}{(T-r)^2} \int_r^T \eta(Y_u^{t,y})^{\beta}\,du} \leq \frac{1}{\sqrt[\beta]{T-s}}\sup_{t \leq u \leq T} \eta(Y_u^{t,y}).
\]
Since $\eta$ is of polynomial growth it follows from the equation (\ref{moment-estimate}) that we can apply the dominated convergence theorem to the third term above when letting $s \to t$. Since $|Y^{t,x}_r - y| \leq 1$ for $r \in [t,\tau]$ and because $\eta^{-1}$ is bounded on compact domains similar arguments show that we can apply the dominated convergence theorem also to the second term. As $\tau > t$, the fundamental theorem of calculus yields    
\[
	0\leq \partial_t\varphi(t,y)+\mathcal L \varphi(t,y) - \frac{1}{\beta \eta(y)^\beta}E\left[\left(\frac{e^{-\theta(T-t)/(\beta+1)}}{\sqrt[\beta]{\int_t^T\eta(Y_u^{t,y})^{-\beta}\,du}}\right)^{\beta+1}\right]- \theta\underline v(t,y).
\]
Using Jensen's inequality and $\beta+1>1$ we obtain,
\[
0\leq\partial_t\varphi(t,y)+\mathcal L \varphi(t,y) - \frac{\underline v(t,y)^{\beta+1}}{\beta \eta(y)^\beta}- \theta\underline v(t,y).
\]
Hence, from the definition~\eqref{nonlinearity} of $F$ and $\lambda,\gamma,\underline v\geq0$ it is seen,
\[
	-\partial_t\varphi(t,y)-\mathcal L \varphi(t,y)-F(y,\underline v(t,y))\leq 0.
\]

	Next, we verify the supersolution property of $\overline v$. Because $\eta$ and $\lambda$ are of polynomial growth it follows $\overline v\in C_{poly}([0,T^-]\times\mathbb R^d)$ from \eqref{Vitali}. Again an application of It\^o's formula and Leibniz's rule similar as above yields that for every smooth test function $\varphi\leq \overline v$ on $[0,T)\times\mathbb R^d$ such that $\varphi(t,y)=\overline v(t,y)$ for some $(t,y)\in[0,T)\times\mathbb R^d$,
\[
	0\geq \partial_t\varphi(t,y)+\mathcal L\varphi(t,y)+\frac{\eta(y)}{(T-t)^p}+\lambda(y)-\frac{p\overline v(t,y)}{T-t}.
\]
From the definition~\eqref{nonlinearity} of $F$, since $\overline v\geq 0$,
\[
	-F(y,\overline v(t,y))\geq -\lambda(y)+\frac{|\overline v(t,y)|^{\beta+1}}{\beta \eta(y)^\beta}.
\]
Hence, since $p=(\beta+1)/\beta$ and by setting $u(t,y)=(T-t)^{1/\beta}\overline v(t,y)/\eta(y)$,
\begin{equation} \label{tedious}
\begin{aligned}
	-\partial_t\varphi(t,y)-\mathcal L\varphi(t,y)-F(y,\overline v(t,y)) &\geq \frac{\eta(y)}{(T-t)^p}-\frac{p\overline v(t,y)}{T-t}+\frac{|\overline v(t,y)|^{\beta+1}}{\beta \eta(y)^\beta}\\
	&= \frac{\eta(y)}{(T-t)^{\frac{\beta+1}{\beta}}} \left(1-\tfrac{(\beta+1)}{\beta}u(t,y)-\tfrac{1}{\beta} |u(t,y)|^{\beta+1}\right).
\end{aligned}
\end{equation}
Using the fact that the map $u\mapsto 1-\frac{\beta+1}{\beta}u+\frac{1}{\beta}|u|^{\beta+1}$ is nonnegative on $\mathbb R$ as it attains its minimum at $1$, we conclude
\[
-\partial_t\varphi(t,y)-\mathcal L\varphi(t,y)-F(t,\overline v(t,y))\geq 0. \qedhere
\]
\end{proof}

	We close this section with a further application of the comparison principle. Under the conditions (A3) and (A4) it establishes the precise asymptotic behavior of a viscosity solution at the terminal time. This observation will be the starting point of the next section.
	
\begin{corollary} \label{cor-a-priori-estimate} 
Let $v\in C_{poly}([0,T^-]\times\mathbb R^d)$ be a nonnegative viscosity solution to problem~\eqref{pde-v}. If the assumptions (A3) and (A4) hold, then $v$ satisfies the following asymptotic behavior: 
\begin{equation} \label{order-of-error}
	(T-t)^{1/\beta}v(t,y)=\eta(y)+ O(T-t) \quad \text{uniformly in $y$ as $t\rightarrow T$.}
\end{equation}
\end{corollary}

\begin{proof}
The statement is proved by identifying a sub- and a supersolution with the desired asymptotics. Due to (A4), the quantity $\|\mathcal L\eta\|$ is well-defined and finite, hence $\delta :=\kappa_0/\|\mathcal L\eta\| > 0$. We verify below that
\begin{equation*}
	 \check v(t,y):=\frac{\eta(y)-\|\mathcal L\eta\|(T-t)}{e^{\theta (T-t)}(T-t)^{1/\beta}} \quad \mbox{and} \quad 
	 \hat v(t,y):=\frac{\eta(y)+\frac{1}{2}\|\mathcal L\eta\|(T-t)}{(T-t)^{1/\beta}}+(T-t)\|\lambda\|
\end{equation*}
are a nonnegative classical sub- and supersolution to~\eqref{pde-v} on $[T-\delta,T)\times\mathbb R^d$, respectively, where nonnegativity follows from $\eta \geq \kappa_0 > 0$ by (A3). Hence, \eqref{order-of-error} follows from the comparison principle. 

Specifically, let us fix $(t,y)\in[T-\delta,T)\times\mathbb R^d$. To verify the supersolution property of $\hat v$, we first obtain by a direct computation,
\begin{equation} \label{derivative-v-hat}
	-\partial_t\hat v(t,y)-\mathcal L \hat v(t,y)=-\frac{\eta(y)+\frac{1-\beta}{2}\|\mathcal L \eta\|(T-t)+\beta\mathcal L \eta(T-t)}{\beta(T-t)^{(\beta+1)/\beta}}+\|\lambda\|.
\end{equation}
Recalling the definition~\eqref{nonlinearity} of $F$, we have since $\hat v\geq0$,
\[
	-F(y,\hat v(t,y))\geq -\lambda(y)+\frac{|\hat v(t,y)|^{\beta+1}}{\beta \eta(y)^\beta}.
\]
Next, we apply Bernoulli's inequality in the form $(u+v+w)^{\beta+1}\geq u^{\beta+1}(1+v/u)^{\beta+1}\geq u^{\beta+1}+(\beta+1)u^\beta v$ for $u,v,w\geq 0$ to the term $|\hat v(t,y)|^{\beta+1}$ and obtain
\begin{equation} \label{nonlinearity-v-hat}
	-F(y,\hat v(t,y))\geq-\lambda (y)+\frac{\eta(y)^{\beta+1}+(\beta+1)\eta(y)^\beta\frac{1}{2}\|\mathcal L\eta\|(T-t)}{\beta\eta(y)^\beta (T-t)^{(\beta+1)/\beta}}.
\end{equation}
Hence, adding \eqref{derivative-v-hat} and \eqref{nonlinearity-v-hat} yields,
\[
	-\partial_t\hat v(t,y)-\mathcal L \hat v(t,y)-F(y,\hat v(t,y))\geq \|\lambda\|-\lambda(y)+ \frac{\|\mathcal L\eta\|-\mathcal L \eta(y)}{(T-t)^{1/\beta}}\geq 0.
\]

Next, we verify the subsolution property of $\check v$. By a direct computation
\begin{equation} \label{derivative-v-check}
	-\partial_t\check v(t,y)-\mathcal L \check v(t,y)=-\frac{\eta(y)+(\beta-1)\|\mathcal L\eta\|(T-t)+\beta\mathcal L\eta(y)(T-t)}{\beta e^{\theta(T-t)}(T-t)^{(\beta+1)/\beta}}-\theta \check v_t(t,y).
\end{equation}
On the other hand, since $\lambda,\gamma\geq 0$, and $\check v\geq 0$ on $[T-\delta)\times\mathbb R^d$,
\[
	-F(y,\check v(t,y))\leq \frac{|\check v(t,y)|^{\beta+1}}{\beta\eta(y)^\beta}+\theta\check v(t,y).
\]
This time, recalling that $\delta$ is chosen such that $\eta(y)\geq \|\mathcal L\eta\|(T-t)$, we estimate the term $|\check v(t,y)|^{\beta+1}$ by the fact $(u-v)^{\beta+1}=u^{\beta+1}(1-v/u)^{\beta+1}\leq u^{\beta+1}-u^\beta v$ for $u\geq v\geq 0$ and obtain
\begin{equation} \label{nonlinearity-v-check}
-F(y,\check v(t,y))\leq\frac{\eta(y)-\|\mathcal L\eta\|(T-t)}{\beta e^{\theta(\beta+1)(T-t)}(T-t)^{(\beta+1)/\beta}}+\theta \check v(t,y).
\end{equation}
Finally, adding \eqref{derivative-v-check} and \eqref{nonlinearity-v-check}, and using $\beta>0$ yields,
\begin{equation*}
-\partial_t\check v(t,y)-\mathcal L \check v(t,y)-F(t,\check v(t,y))\leq 0. \qedhere
\end{equation*}
\end{proof}


\section{Existence of a classical solution} \label{sec-local-existence}

	In this section we prove Theorem \ref{thm-existence-GHS} and hence assume throughout that (A1)--(A4) hold. Our existence proof is based on the explicit asymptotic behavior established in Corollary~\ref{cor-a-priori-estimate}. It tells us the solution must be of the form 
\begin{equation} \label{straightforward-ansatz}
	\qquad\qquad\qquad  v(T-t,y)= \frac{\eta(y)+\tilde u(t,y)}{t^{1/\beta}}, \quad \tilde u(t,y)=  O(t) \text{ uniformly in $y$ as $t\rightarrow0$},
\end{equation}
where we reversed the time variable as we will do for the rest of this subsection. For reasons that will become clear later, it will be more convenient to choose the following equivalent ansatz: 
\begin{equation} \label{educated-ansatz}
	\qquad\qquad\qquad v(T-t,y)= \frac{\eta(y)}{t^{1/\beta}}+\frac{u(t,y)}{t^{1+1/\beta}}, \quad u(t,y)=  O(t^2) \text{ uniformly in $y$ as $t\rightarrow0$}.
\end{equation}
Plugging the asymptotic ansatz into~\eqref{pde-v} results in a semilinear parabolic equation for $u$ with finite initial condition, but with a singularity in the nonlinearity. This motivates the following lemma.

\begin{lemma} \label{lemma-separation}
	If for some $\delta>0$ a function $u\in  C^\alpha({[0,\delta]};D(\mathcal L))\cap C^{1+\alpha}({[0,\delta]};C(\mathbb R^d))$ satisfies
\begin{equation}  \label{growth-of-u}
	|u(t,y)|\leq t\eta(y), \quad t\in[0,\delta],\, y\in\mathbb R^d,
\end{equation}
and solves the equation
\begin{equation} \label{pde}
\begin{split} 
	\partial_tu(t,y)&=\mathcal L u(t,y)+t\mathcal L \eta(y)+t^{p}\lambda(y)-\frac{\eta(y)}{\beta}\sum_{k=2}^\infty\dbinom{\beta +1}{k} \left(\frac{u(t,y)}{t\eta(y)}\right)^k \\&\quad+\frac{\theta t^p\gamma(y)(t\eta(y)+u(t,y))}{\sqrt[\beta]{(t^p\gamma(y))^{\beta}+|t\eta(y)+u(t,y)|^{\beta}}}-\theta(t\eta(y)+u(t,y)), \quad t>0\,,y\in\mathbb R^d,
\end{split}
\end{equation}
then a local solution $v\in C^\alpha([T-\delta,T^-];D(\mathcal L))\cap C^{1+\alpha}([T-\delta,T^-];C(\mathbb R^d))$ to problem~\eqref{pde-v} is given by
\begin{equation*}
	v(t,y)= \frac{\eta(y)}{(T-t)^{1/\beta}}+\frac{u(T-t,y)}{(T-t)^{1+1/\beta}}.
\end{equation*}
\end{lemma}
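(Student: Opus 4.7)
The plan is a direct substitution of the ansatz into \eqref{pde-v}. First I would reverse time by setting $\tau := T - t$ and $\tilde v(\tau, y) := v(T-\tau, y)$, which recasts the singular terminal value problem as the initial value problem $\tilde v_\tau = \mathcal L \tilde v + F(y, \tilde v)$ on $(0, \delta] \times \mathbb R^d$, with the ansatz $\tilde v(\tau, y) = \eta(y)/\tau^{1/\beta} + u(\tau, y)/\tau^{1+1/\beta}$. A straightforward differentiation gives explicit formulas for $\tilde v_\tau$ and $\mathcal L \tilde v$. Multiplying the PDE through by $\tau^{1+1/\beta}$ reduces the claim to an algebraic identity between $u_\tau$, $\mathcal L u$, $\tau \mathcal L \eta$, and $\tau^{1+1/\beta} F(y, \tilde v)$, modulo two explicit singular leftovers of orders $1$ and $1/\tau$ that come from $\tilde v_\tau$ itself, namely $-\eta/\beta$ and $-(\beta+1)u/(\beta\tau)$.

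The heart of the proof is the term-by-term expansion of $F(y, \tilde v)$. Using the identity $p = 1 + 1/\beta$, so that $p\beta = \beta + 1$, one has $\tau^{1+1/\beta} \tilde v = \tau \eta + u$, and a parallel manipulation of the $\beta$-th root gives $\tau^{1+1/\beta}(\gamma^\beta + \tilde v^\beta)^{1/\beta} = ((\tau^p \gamma)^\beta + (\tau \eta + u)^\beta)^{1/\beta}$, which turns the dark-pool and linear-$\theta \tilde v$ pieces of $F$ directly into the last two terms of \eqref{pde}; here the growth hypothesis \eqref{growth-of-u} ensures $\tau \eta + u \geq 0$, so $|\tau \eta + u|^\beta = (\tau \eta + u)^\beta$. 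The only nontrivial step is the superlinear term: factoring $\tau \eta$ yields
\begin{equation*}
\tau^{1+1/\beta} \cdot \frac{\tilde v^{\beta+1}}{\beta \eta^\beta} = \frac{\eta(y)}{\beta}\left(1 + \frac{u(\tau,y)}{\tau\,\eta(y)}\right)^{\beta+1},
\end{equation*}
and I expand via the binomial series $(1+x)^{\beta+1} = \sum_{k \geq 0}\binom{\beta+1}{k} x^k$, applicable since \eqref{growth-of-u} gives $|u/(\tau \eta)| \leq 1$ and $\beta + 1 > 0$. The first two terms of the series, $1 + (\beta+1)u/(\tau\eta)$, are designed precisely to cancel the two leftover singular contributions $-\eta/\beta$ and $-(\beta+1)u/(\beta\tau)$ from $\tilde v_\tau$, while the tail from $k = 2$ onward matches the series on the RHS of \eqref{pde} with the correct sign.

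Collecting all pieces and reading $\lambda\,\tau^{1+1/\beta} = \tau^p \lambda$ yields \eqref{pde} verbatim. The regularity $v \in C^{1,2}([T-\delta, T) \times \mathbb R^d)$ is immediate from $u \in C^{1,2}$, $\eta \in C^2$, and the smoothness of $(T-t)^{-1/\beta}$ and $(T-t)^{-1-1/\beta}$ on $[T-\delta, T)$. I do not anticipate any real obstacle here: the verification is pure bookkeeping, motivated entirely by the asymptotics \eqref{order-of-error}, which dictate precisely the exponents used to peel the singular part off $v$. The only mildly delicate point is the convergence of the binomial series at the boundary $|u| = \tau \eta$, which is covered by Abel's theorem given $\beta + 1 > 0$ (alternatively, one may perform the expansion on the set where the inequality is strict and extend by continuity).
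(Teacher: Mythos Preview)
Your proposal is correct and follows essentially the same route as the paper: substitute the ansatz into \eqref{pde-v}, multiply through by $\tau^{1+1/\beta}=\tau^p$, expand the superlinear term $(\eta/\beta)(1+u/(\tau\eta))^{\beta+1}$ as a binomial series (convergent by \eqref{growth-of-u}), and observe that the $k=0$ and $k=1$ terms cancel the singular leftovers $-\eta/\beta$ and $-(\beta+1)u/(\beta\tau)$ coming from $\tilde v_\tau$. The paper's proof is simply a terser version of exactly this computation.
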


\begin{proof}
The statement is verified by plugging the ansatz into~\eqref{pde-v}, multiplying by $t^p=t^{(\beta+1)/\beta}$, and by using the binomial series for the term
\begin{align*}
 t^p\frac{|v(T-t,y)|^{\beta+1}}{\beta\eta(y)^\beta} =\frac{\eta(y)}{\beta }\left|1+\frac{u(t,y)}{t\eta(y)}\right|^{\beta+1} =\frac{\eta(y)}{\beta }\sum_{k=0}^\infty\dbinom{\beta +1}{k} \left(\frac{u(t,y)}{t\eta(y)}\right)^k
\end{align*}
to see that the first two terms of the series cancel out. The growth condition \eqref{growth-of-u} guarantees that the binomial series does indeed converge.
\end{proof}

\begin{remark}
	The reason for choosing the ansatz~\eqref{educated-ansatz} is that the series in \eqref{pde} starts at $k=2$, and not at $k=1$. This will be crucial for the fixed point argument below. The more straightforward ansatz \eqref{straightforward-ansatz} results in the equation
\begin{equation*}
	\partial_t\tilde u=\mathcal L \tilde u+\mathcal L\eta+t^{\frac{1}{\beta}}\lambda-\frac{\eta}{\beta t}\left(\frac{\tilde u}{\eta}+\sum_{k=2}^\infty\binom{\beta+1}{k}\left(\frac{\tilde u}{\eta}\right)^k\right)+\frac{\theta t^{\frac{1}{\beta}}\gamma(\eta+\tilde u)}{\sqrt[\beta]{t\gamma^\beta+|\eta+\tilde u|^\beta}}-\theta(\eta+\tilde u),
\end{equation*}
for which we have no analogues to Lemma~\ref{lemma-f-continuous} and Lemma~\ref{lemma-locally-lip} below, due to the term $t^{-1}\tilde u$.
\end{remark}

	 We will solve equation \eqref{pde} using the semigroup approach for parabolic equations in Banach spaces; we refer to the monograph by Lunardi \cite{Lunardi95} as the standard reference. To this end, we interpret \eqref{pde} as an evolution equation
\begin{equation} \label{abstract-pde}
	u^\prime(t)=\mathcal L u(t)+f(t,u(t)), \quad t>0;\quad u(0)=0,
\end{equation}
in the Banach algebra $U:=C(\mathbb R^d)$ of bounded continuous functions endowed with the supremum norm $\|\cdot\|$, where the nonlinearity $f$ is given by
\begin{equation*} 
	f(t,u)=t\mathcal L\eta+t^{p}\lambda-\frac{\eta}{\beta}\sum_{k=2}^\infty\dbinom{\beta +1}{k} \left(\frac{u}{t\eta}\right)^k + \frac{\theta t^p\gamma(t\eta+u)}{\sqrt[\beta]{(t^p\gamma)^{\beta}+|t\eta+u|^{\beta}}}-\theta(t\eta+u).
\end{equation*}

	 The general theory suggests to look first for a local \textit{mild solution} of \eqref{abstract-pde}. That is, to show there is a fixed point $u$ of the integral operator $\Gamma$ defined in $C([0,\delta];U)$ by
\begin{equation} \label{operator}
	\Gamma(u)(t)=\int_0^te^{(t-s)\mathcal L}f(s,u(s))\,ds, \quad 0\leq t\leq\delta,
\end{equation}
if $\delta>0$ is small enough where $\{e^{t\mathcal L} : t \geq 0\}$ is the analytic semigroup generated by $\mathcal L$ in~$U$. Regularity of the mild solution $u$ will then follow from analyticity of the semigroup and H\"{o}lder continuity of $t\mapsto f(t,u(t))$. 

The singular behavior of $f$ near $t=0$ prevents us from directly applying general theory. In fact, the operator $\Gamma$ is not defined on the whole space $C([0,\delta];U)$, and its domain is not closed with respect to the supremum norm. We overcome these difficulties by carrying out the usual contraction argument with respect to an appropriate weighted norm on $C([0,\delta];U)$. 

	In order to guarantee that the function $t\mapsto f(t,u(t))$ behaves well at $t=0$ it seems reasonable to restrict the set of potential mild solutions to those functions $u\in C([0,\delta];U)$ such that $u(t)= o(t)$ as $t\rightarrow 0$. Yet, there is no nice norm making this set of functions a Banach space. Recalling \eqref{order-of-error} however, we actually expect the slightly stronger condition $u(t)=O(t^2)$ as $t\rightarrow 0$ to be satisfied. This suggests to view $\Gamma$ as an operator acting in the space
\begin{equation*}
	E=\big\{u\in C([0,\delta];U): u(t)= O(t^2) \text{ as } t\rightarrow 0 \big\},
\end{equation*}
endowed with the weighted norm
\begin{equation*}
	\left\|u\right\|_E=\sup_{0<t\leq \delta}\left\|t^{-2}u(t)\right\|.
\end{equation*}  

\begin{lemma}
	The vector space $E$ endowed with the norm $\left\|\,\cdot\,\right\|_E$ is a Banach space.
\end{lemma}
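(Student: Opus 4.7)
The plan is to proceed along the standard template for weighted-norm Banach spaces: verify $\|\cdot\|_E$ is a norm, then show completeness by producing a candidate limit via rescaling, and finally check that this limit lies in $E$ with convergence in the weighted norm.

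First I would observe that $E$ is manifestly a vector subspace of $C([0,\delta];Y)$, since the condition $u(t)=O(t^2)$ is preserved under scalar multiplication and addition. For the norm, any $u\in E$ satisfies $\sup_{0<t\leq\delta}\|t^{-2}u(t)\|<\infty$ by definition, so $\|u\|_E$ is finite; homogeneity and the triangle inequality are immediate. For definiteness, $\|u\|_E=0$ forces $u(t)=0$ for all $t\in(0,\delta]$, and continuity of $u$ at $0$ together with $\|u(t)\|\leq\|u\|_E\,t^2$ (valid for any $u\in E$) then gives $u(0)=0$.

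The heart of the argument is completeness. Given a Cauchy sequence $(u_n)$ in $(E,\|\cdot\|_E)$, I set $w_n(t):=t^{-2}u_n(t)$ for $t\in(0,\delta]$. The Cauchy property translates to $(w_n)$ being uniformly Cauchy on $(0,\delta]$ in $Y$, hence it converges uniformly to some $w\in C((0,\delta];Y)$ that is bounded by $\sup_n\|u_n\|_E=:M<\infty$. I then define the candidate limit
\begin{equation*}
u(t):=\begin{cases} t^2 w(t),& t\in(0,\delta],\\ 0,& t=0.\end{cases}
\end{equation*}
The bound $\|u(t)\|\leq M t^2$ immediately gives both $u(t)=O(t^2)$ as $t\to 0$ and continuity of $u$ at $0$ as a $Y$-valued map. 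On $(0,\delta]$, $u_n\to u$ uniformly (since $\|u_n(t)-u(t)\|\leq \delta^2\|w_n-w\|_{\infty}$), so $u$ is continuous there as a uniform limit of continuous functions, giving $u\in C([0,\delta];Y)$ and hence $u\in E$.

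Convergence in $\|\cdot\|_E$ follows from a standard Cauchy-to-limit argument: for $\varepsilon>0$ pick $N$ with $\|w_n-w_m\|_{\infty}<\varepsilon$ for $n,m\geq N$; letting $m\to\infty$ pointwise in $t\in(0,\delta]$ yields $\|t^{-2}(u_n(t)-u(t))\|\leq\varepsilon$ uniformly in $t$, i.e. $\|u_n-u\|_E\leq\varepsilon$ for $n\geq N$. The only mildly delicate point is ensuring continuity of $u$ at $t=0$ in $Y$, but this is handled by the uniform bound $\|u(t)\|\leq M t^2$ obtained from the uniform-in-$n$ control of $\|u_n\|_E$ inherited from the Cauchy property; no subtler obstacle appears.
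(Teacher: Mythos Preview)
Your argument is correct and complete. The paper itself omits the proof of this lemma entirely (it is stated without a proof environment, presumably as routine), so there is no approach to compare against; your standard rescaling argument via $w_n(t)=t^{-2}u_n(t)$ is exactly what one would expect and fills in the details cleanly.
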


	 The next lemma shows in particular that the integral operator $\Gamma$ given in \eqref{operator} is well-defined on the closed ball 
\[
	 \overline B_E(\kappa_0/\delta):= \big\{u\in E:\left\|u\right\|_E\leq \kappa_0/\delta \big\}.
\]
\begin{lemma} \label{lemma-f-continuous}
	Let $R>0$ and $\delta\in{(0,\kappa_0/R]}$. 
	\begin{itemize}
		\item[(i)] For every $u\in \overline B_E(R)$, the function $f(\,\cdot\,,u(\,\cdot\,))$ belongs to  $C([0,\delta];U)$. In particular, the operator $\Gamma$ defined in \eqref{operator} is well defined on $\overline B_E(R)$. 
		\item[(ii)] If $u\in \overline B_E(R)\cap C^\alpha([0,\delta];U)$ for some $\alpha \in (0,1)$, then $f(\,\cdot\,,u(\,\cdot\,))$ is $\alpha$-H\"older continuous, i.e., belongs to $C^\alpha([0,\delta];U)$.
	\end{itemize}
\end{lemma}

\begin{proof}
	For $u\in\overline B_E(R)$ we consider the functions $g:{[0,\delta]}\rightarrow U$ and $h:{[0,\delta]}\times U\rightarrow U$ given by
\begin{equation*}
	g(t)=\sum_{k=2}^\infty \dbinom{\beta +1}{k}\left(\frac{u(t)}{t\eta}\right)^k \quad \text{and} \quad 
h(t,w)=\frac{t^p\gamma w}{\sqrt[\beta]{(t^p\gamma)^{\beta}+|w|^{\beta}}}-w,
\end{equation*}
so that we may decompose $f(t,u(t))$ in the following way:
\begin{equation} \label{decomposition}
	f(t,u(t))=t\mathcal L\eta+t^{p}\lambda-(p-1)\eta g(t) +\theta h(t,t\eta+u(t)).
\end{equation}
The assumption $\delta\leq\kappa_0 /R$ guarantees that the series defining $g(t)$ converges in $U$ since then
\begin{equation*}
	\left\|\frac{u(t)}{t\eta}\right\| \leq\frac{t^2 R}{t\kappa_0}\leq\frac{\delta R}{\kappa_0}\leq 1, \quad t\in[0,\delta].
\end{equation*}
In view of \eqref{decomposition} it will be sufficient to show that $g$ and $h(\,\cdot\,,\,\cdot\,\eta+u(\,\cdot\,))$ are continuous, or even $\alpha$-H\"{o}lder continuous if $u\in C^\alpha([0,\delta];U)$. For the latter note that $h$ is continuously differentiable on $(0,\delta]\times U$. In fact,
\begin{equation*}
	\left\|\partial_th(t,w)\right\|_{L(U)}=\left\|\frac{pt^{p-1}\gamma w|w|^\beta}{\sqrt[\beta]{ (t^p\gamma)^\beta+|w|^\beta}^{\beta+1}}\right\|\leq \left\|\frac{pt^{p-1}\gamma w|w|^\beta}{(t^p\gamma)^{\beta + 1}+pt^p\gamma|w|^\beta}\right\|\leq\frac{\|w\|}{t },
\end{equation*}
where we used Bernoulli's inequality and $\beta+1=p\beta$, and 
\begin{equation*}
	\left\|\partial_wh(t,w)\right\|_{L(U)}=\left\|\frac{(t^p\gamma)^{\beta+1}}{\sqrt[\beta]{ (t^p\gamma)^\beta+|w|^\beta}^{\beta+1}}-1\right\|\leq 1. 
\end{equation*}
Hence, for all $0\leq t\leq s\leq\delta$,
\begin{align*}
	\|h(t,t\eta+u(t))-h(s,s\eta+u(s))\|&\leq \frac{\|t\eta+u(t)\|}{t}|t-s|+ { \| \eta\| |t-s|}+\|u(t)-u(s)\| \\
	&\leq (2 \|\eta\|+\kappa_0)|t-s|+\|u(t)-u(s)\|.
\end{align*}

	In order to establish the continuity of $g$, notice that for every $k\geq 2$ and $0\leq t\leq s\leq\delta$ it holds that
\begin{equation} \label{first-estimate}
\begin{aligned}
	\left\|\left(\frac{u(t)}{t\eta}\right)^k\right. -&\left.\left(\frac{u(s)}{s\eta}\right)^k\right\|\leq \left\|\left(\frac{u(t)}{t\eta}\right)^k-\left(\frac{u(t)}{s\eta}\right)^k\right\| + \left\|\left(\frac{u(t)}{s\eta}\right)^k-\left(\frac{u(s)}{s\eta}\right)^k\right\| 
\\
&\leq \frac{\left\|u(t)\right\|^k}{\kappa_0^k} \left|\frac{1}{t^k}-\frac{1}{s^k}\right| +\frac{1}{s^k \kappa_0^k} \left\|u(t)-u(s)\right\|\sum_{l=0}^{k-1}\left\|u(t)\right\|^l\left\|u(s)\right\|^{k-1-l}
\\
&\leq \frac{t^{2k}R^k}{t^ks^k \kappa_0^k} \left|t^k-s^k\right| +\frac{R^{k-1}}{s^k \kappa_0^k} \left\|u(t)-u(s)\right\|\sum_{l=0}^{k-1}t^{2l}s^{2(k-1-l)}\\
&\leq\frac{k\delta^{k-1}R^k}{\kappa_0^k}\left|t-s\right| +\frac{k\delta^{k-2}R^{k-1}}{\kappa_0^k}\|u(t)-u(s)\| \\
&\leq \frac{kR}{\kappa_0}|t-s|+\frac{k R}{\kappa_0^2}\left\|u(t)-u(s)\right\|.
\end{aligned}
\end{equation}
Using the identity $ k\binom{\beta+1}{k}=(\beta+1)\binom{\beta}{k-1}$ it follows that
\begin{equation*}
\|g(t)-g(s)\|\leq\frac{(2^\beta-1)(\beta+1)R}{\kappa_0}|t-s|+\frac{(2^\beta-1)(\beta+1)R}{\kappa_0^2}\|u(t)-u(s)\|.
\end{equation*}
Hence, $g$ is uniformly continuous and even $\alpha$-H\"{o}lder continuous if $u\in C^\alpha([0,\delta];U)$.
\end{proof}

	 The usual assumption on the nonlinearity to carry out the fixed point argument would be that $f(t,u)$ is locally Lipschitz continuous in $u$ uniformly in $t$. The next lemma proves an appropriate analogue to this assumption for our singular nonlinearity $f$. 
	
\begin{lemma} \label{lemma-locally-lip}
	For every $R>0$ there exists a constant $L>0$ independent of $\delta\in{(0,\kappa_0/R]}$ such that
\begin{equation*}
	\left\|f(t,u(t))-f(t,v(t))\right\|\leq L\left\|u(t)-v(t)\right\|, \quad   u,v\in\overline B_E(R),\, \, t\in[0,\delta].
\end{equation*}
\end{lemma}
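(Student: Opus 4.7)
The plan is to exploit that $u$ and $v$ are evaluated at the \emph{same} time $t$, so the purely time-dependent summands $t\mathcal L\eta$, $t^{p}\lambda$ and $-\theta t\eta$ cancel in $f(t,u(t))-f(t,v(t))$. Using the decomposition \eqref{decomposition} from the proof of Lemma~\ref{lemma-f-continuous} (with $w=t\eta+\,\cdot\,$), what remains is
\begin{equation*}
f(t,u(t))-f(t,v(t))=-\frac{\eta}{\beta}\sum_{k=2}^{\infty}\binom{\beta+1}{k}\!\left[\left(\frac{u(t)}{t\eta}\right)^{\!k}\!-\left(\frac{v(t)}{t\eta}\right)^{\!k}\right]+\theta\bigl(h(t,t\eta+u(t))-h(t,t\eta+v(t))\bigr).
\end{equation*}
The ``$h$-part'' is the easy one: since $\|\partial_{w}h(t,w)\|_{L(Y)}\leq 1$ uniformly in $(t,w)$ (already shown in Lemma~\ref{lemma-f-continuous}), the mean value theorem yields the bound $\|h(t,t\eta+u(t))-h(t,t\eta+v(t))\|\leq\|u(t)-v(t)\|$. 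So the work is in the series.

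For the series I would set $a=u(t)/(t\eta)$ and $b=v(t)/(t\eta)$ and use the algebraic identity $a^{k}-b^{k}=(a-b)\sum_{l=0}^{k-1}a^{l}b^{k-1-l}$. Because $u,v\in\overline B_{E}(R)$ and $\delta\leq\kappa_0/R$, the estimate $\|a\|,\|b\|\leq tR/\kappa_0\leq \delta R/\kappa_0\leq 1$ holds on $[0,\delta]$, so
\begin{equation*}
\left\|\left(\frac{u(t)}{t\eta}\right)^{\!k}\!-\left(\frac{v(t)}{t\eta}\right)^{\!k}\right\|\leq k\left(\frac{tR}{\kappa_0}\right)^{\!k-1}\frac{\|u(t)-v(t)\|}{t\kappa_0}=\frac{k R}{\kappa_0^{2}}\left(\frac{tR}{\kappa_0}\right)^{\!k-2}\|u(t)-v(t)\|.
\end{equation*}
The decisive point is that the series begins at $k=2$, so the exponent $k-2$ is nonnegative and the apparent $1/t$ singularity coming from $(t\eta)^{-1}$ is absorbed; moreover $(tR/\kappa_0)^{k-2}\leq 1$ on $[0,\delta]$.

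Multiplying by $\|\eta\|/\beta$ and the binomial coefficient, summing, and invoking the identity $k\binom{\beta+1}{k}=(\beta+1)\binom{\beta}{k-1}$ reduces the sum to $(\beta+1)\sum_{j=1}^{\infty}|\binom{\beta}{j}|$, which is finite since $\beta>0$ (this is exactly the absolute convergence of $(1+x)^{\beta}$ at $x=1$). Altogether
\begin{equation*}
\left\|\frac{\eta}{\beta}\sum_{k=2}^{\infty}\binom{\beta+1}{k}\!\left[\left(\tfrac{u(t)}{t\eta}\right)^{\!k}\!-\left(\tfrac{v(t)}{t\eta}\right)^{\!k}\right]\right\|\leq\frac{\|\eta\|(\beta+1)(2^{\beta}-1)R}{\beta\kappa_0^{2}}\,\|u(t)-v(t)\|,
\end{equation*}
and combining with the $h$-estimate gives the Lipschitz constant $L=\theta+\|\eta\|(\beta+1)(2^{\beta}-1)R/(\beta\kappa_0^{2})$, which depends on $R$ but not on $\delta$, as required.

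The only genuine subtlety is the one flagged in the remark following Lemma~\ref{lemma-separation}: if one tried to prove an analogous lemma for the ansatz \eqref{straightforward-ansatz}, the series would start at $k=1$, the exponent $k-2$ would be $-1$, and the $1/t$ factor in $(u-v)/(t\eta)$ would not be absorbable—so the fixed-point scheme would fail. Once this structural observation is acknowledged, the rest is a routine computation of the form sketched above.
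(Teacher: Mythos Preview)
Your proof is correct and follows essentially the same approach as the paper: the decomposition into the $h$-part (handled via the bound $\|\partial_w h\|_{L(Y)}\le 1$ from Lemma~\ref{lemma-f-continuous}) and the series part (handled via $a^k-b^k=(a-b)\sum a^lb^{k-1-l}$ together with the identity $k\binom{\beta+1}{k}=(\beta+1)\binom{\beta}{k-1}$), arriving at the same Lipschitz constant $L=p(2^\beta-1)\kappa_0^{-2}R\|\eta\|+\theta$. If anything, you spell out a bit more detail than the paper, which merely refers back to ``estimates similar to those in \eqref{first-estimate}.''
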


\begin{proof} 
	Let $u,v\in\overline B_E(R)$ and $t\in[0,T]$. The proof of Lemma~\ref{lemma-f-continuous} shows that the function $h$ in~\eqref{decomposition} is nonexpanding in the second argument, and estimates similar to those in \eqref{first-estimate} yield
\begin{equation*}
	\left\|\left(\frac{u(t)}{t\eta}\right)^k- \left(\frac{v(t)}{t\eta}\right)^k\right\|\leq\frac{kR}{\kappa_0^2} \left\|u(t)-v(t)\right\|
\end{equation*}
for every $k\geq2$. Hence, using once more that $k\binom{\beta+1}{k}=(\beta+1)\binom{\beta}{k-1}$ and $(\beta+1)/\beta=p$ we conclude that
\begin{equation*}
	\|f(t,u(t))-f(t,v(t))\| \leq(p(2^\beta-1)\kappa_0^{-2}R\|\eta\|+\theta)\left\|u(t)-v(t)\right\|. \qedhere
\end{equation*}
\end{proof}

	 We are now ready to carry out the fixed point argument and to prove the desired regularity of the fixed point. In view of Lemma~\ref{lemma-separation} this then gives us a local solution to the problem~\eqref{pde-v}.
	
\begin{proposition} \label{prop-local-existence}
	Under assumptions (A1) and (A2), there exists a short-time solution $$u\in C^\alpha({[0,\delta]};D(\mathcal L))\cap C^{1+\alpha}({[0,\delta]};C(\mathbb R^d))$$ to the equation \eqref{pde} that satisfies the growth condition \eqref{growth-of-u}.
\end{proposition}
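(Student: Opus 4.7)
The plan is to apply the Banach fixed-point theorem to $\Gamma$ on a closed ball $\overline{B}_E(R) \subset E$, for a suitable constant $R>0$ and sufficiently small $\delta>0$, and then to bootstrap the mild solution so obtained to a classical one by standard analytic-semigroup techniques.

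For the self-mapping property, I fix $R$ and $\delta\leq\kappa_0/R$ and use the decomposition
\[
f(s, u(s)) = s\mathcal{L}\eta + s^p \lambda - (p-1)\eta\, g(s) + \theta\, h(s, s\eta + u(s))
\]
from the proof of Lemma~\ref{lemma-f-continuous}. Since the series defining $g$ starts at $k=2$ and $\|u(s)/(s\eta)\| \leq Rs/\kappa_0 \leq 1$, one has $\|g(s)\| \leq C_\beta (Rs/\kappa_0)^2$ for a constant $C_\beta$ depending only on $\beta$; the estimate $\|\partial_w h\|_{L(Y)} \leq 1$ combined with $h(s, 0) = 0$ gives $\|h(s, s\eta + u(s))\| \leq s\|\eta\| + Rs^2$. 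Putting these together,
\[
\|f(s, u(s))\| \leq A\, s + C(R)\, s^2 + \|\lambda\|\, s^p, \qquad A := \|\mathcal{L}\eta\| + \theta\|\eta\|,
\]
with $A$ independent of $R$. Because $\|e^{t\mathcal{L}}\|_{L(Y)} \leq 1$, integrating and dividing by $t^2$ yields $\|\Gamma(u)\|_E \leq A/2 + o(1)$ as $\delta\to 0$. Hence choosing $R := A + 1$ and then $\delta$ small enough (still with $\delta \leq \kappa_0/R$) makes $\Gamma$ map $\overline{B}_E(R)$ into itself.

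For the contraction, Lemma~\ref{lemma-locally-lip} supplies $\|f(s, u(s)) - f(s, v(s))\| \leq L\|u(s) - v(s)\|$ with $L = L(R)$ independent of $\delta$, so
\[
\|\Gamma(u) - \Gamma(v)\|_E \leq \frac{L\delta}{3}\|u - v\|_E,
\]
a strict contraction once $\delta < 3/L$. Banach's theorem then delivers a unique fixed point $u \in \overline{B}_E(R)$; since $\|u(t)\| \leq Rt^2 \leq R\delta\, t \leq \kappa_0\, t \leq \eta(y)\, t$, the growth condition~\eqref{growth-of-u} holds. To upgrade regularity, note that $f(\cdot, u(\cdot)) \in C([0,\delta]; Y)$ by Lemma~\ref{lemma-f-continuous}, so standard analytic-semigroup theory (Lunardi~\cite{Lunardi95}, Chapter~4) yields $u \in C^\alpha([0,\delta]; Y)$ for some $\alpha \in (0, 1)$; bootstrapping via the H\"older statement of Lemma~\ref{lemma-f-continuous} gives $f(\cdot, u(\cdot)) \in C^\alpha$, whence $u \in C^1((0,\delta]; Y) \cap C((0,\delta]; D(\mathcal{L}))$, and interior Schauder estimates for the uniformly elliptic operator $\mathcal{L}$ lift this to $u \in C^{1,2}([0,\delta] \times \mathbb{R}^d)$ solving \eqref{pde} classically.

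The main obstacle is the self-mapping estimate: one has to see that the weighted $E$-norm precisely absorbs every contribution to $\|f\|$. This is no coincidence---the ansatz~\eqref{educated-ansatz} was engineered so that the binomial series in~$f$ begins at $k=2$, producing the quadratic-in-$s$ factor needed to stay inside $E$; the naive ansatz~\eqref{straightforward-ansatz} would introduce a $t^{-1}u$ term and break the mechanism, as already warned in the remark following Lemma~\ref{lemma-separation}.
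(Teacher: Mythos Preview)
Your argument is correct and follows essentially the same route as the paper: a Banach fixed-point argument for $\Gamma$ on a ball in the weighted space $E$, followed by the H\"older bootstrap via Lunardi's Proposition~4.2.1 and Theorem~4.3.1. The only cosmetic differences are that you bound $\|f(s,u(s))\|$ directly for the self-mapping (the paper instead splits $\|\Gamma(u)\|\le\|\Gamma(u)-\Gamma(0)\|+\|\Gamma(0)\|$ and reuses the contraction for the first piece), and you invoke the contraction bound $\|e^{t\mathcal L}\|_{L(Y)}\le 1$ for the Feller semigroup where the paper works with a generic constant $M$; both lead to the same conclusion with slightly different explicit choices of $R$ and $\delta$.
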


\begin{proof}

	We prove below that there exists $R>0$ and $\delta\in{(0,\kappa_0/R]}$ such that the operator~$\Gamma$ defined by \eqref{operator} has a fixed point $\overline u$ in $\overline{B}_E(R)$. 
	
	In order to see that this (local) mild solution to \eqref{abstract-pde} belongs to $C^\alpha({[0,\delta]};D(\mathcal L))\cap C^{1+\alpha}({[0,\delta]};U)$ notice first that $f(\,\cdot\,,\overline u(\,\cdot\,)) \in  C({[0,\delta]};U)$, due to Lemma~\ref{lemma-f-continuous}(i). Thus, it follows from  \cite[Proposition 4.2.1]{Lunardi95} that $\Gamma$ maps into $C^\alpha([0,\delta];U)$, for every $\alpha\in{(0,1)}$. Since $\overline u$ is a fixed point of $\Gamma$ it follows from Lemma~\ref{lemma-f-continuous}(ii) that $f(\,\cdot\,,\overline u(\,\cdot\,)) \in C^\alpha([0,\delta];U)$. 
	 
	Moreover, $\overline u\in E$  implies that $\overline u(0)= \mathcal L \overline u(0)+f(0,\overline u(0))\equiv 0$ belongs to the domain of $\mathcal{L}$. Along with the  H\"{o}lder continuity of $f(\,\cdot\,,\overline u(\,\cdot\,))$  it now follows from  \cite[Theorem 7.1.10(iv)]{Lunardi95} that 
$\overline u\in  C^\alpha({[0,\delta]};D(\mathcal L))\cap C^{1+\alpha}({[0,\delta]};U)$ as desired.

It remains to prove the existence of a fixed point of the operator $\Gamma$. In terms of
	$M=\sup_{0\leq t\leq 1}\|e^{t\mathcal L}\|_{L(U)}$
we claim that one can choose
\begin{equation*}
	R=2M \left(\|\mathcal L\eta\|+\|\lambda\|+\theta\|\eta\|\right) \quad\text{and}\quad\delta=\min\{\kappa_0/R,(2ML)^{-1},1\},
\end{equation*}
where $L > 0$ is the Lipschitz constant given by Lemma~\ref{lemma-locally-lip}. Since $\delta\leq \kappa_0/R$ the operator $\Gamma$ is  well-defined on $\overline B_E(R)$, due to Lemma~\ref{lemma-f-continuous}. To show that $\Gamma$ is a contraction with respect to $\|\cdot\|_E$, let $u,v\in\overline B_E(R)$. By the choice of $M$ it holds for every $t\in[0,\delta]$ that
\begin{align*} 
	\|\Gamma(u)(t)-\Gamma(v)(t)\|&\leq t M \sup_{s\in[0,t]}\|f(s,u(s))-f(s,v(s))\|\\ &\leq \delta M L\sup_{s\in[0,t]} \|u(s)-v(s)\|\\ &\leq \delta ML t^2\|u-v\|_E.
\end{align*}
Hence, 
\begin{equation*}
	\|\Gamma(u)-\Gamma(v)\|_E\leq \frac{1}{2} \|u-v\|_E.
\end{equation*}

	To show that $\Gamma$ maps $\overline B_E(R)$ into itself, note that since $\delta\leq1$ and $p>1$ one has that $s^p\leq s$ for all $s\in[0,\delta]$, and so it holds for every $t\in[0,\delta]$ that
\begin{align*}
	\|\Gamma(u)(t)\|&\leq \|\Gamma(u)(t)-\Gamma(0)(t)\|+\|\Gamma(0)(t)\|\\&\leq t^2\frac{R}{2}+tM \sup_{s\in[0,t]}\bigg\|s \mathcal L\eta+s^{p}\lambda +\frac{\theta s^p\gamma s\eta}{\sqrt[\beta]{(s^p\gamma)^\beta+(s\eta)^\beta}}- \theta s\eta\bigg\|\\
	&\leq t^2\frac{R}{2}+tM\sup_{s\in[0,t]}\left\{s\|\mathcal L\eta\|+s^p\|\lambda\|+\theta s\|\eta\|\right\}\\
	&\leq t^2R.
\end{align*}
The operator $\Gamma$ does therefore map $\overline B_E(R)$ contractive into itself. Hence, it has a unique fixed point $\overline u$ in $\overline B_E(R)$. 
\end{proof}

\begin{remark} \label{remark-pi-semigroup}
	In order to apply the above fixed point argument establishing a mild solution to~\eqref{abstract-pde} it is not needed that $\mathcal L$ generates an analytic semigroup. We use the analyticity of $\mathcal L$ to obtain maximal regularity results. We thank an anonymous referee for pointing out that one may possibly drop the additional assumptions (A1) and (A2) on the diffusion coefficients and interpret~$\mathcal L$ as the generator  $\tilde{ \mathcal L}$ of the $\pi$-continuous Markov transition semigroup induced by~$Y$ on~$C(\mathbb R^d)$; see \cite[Appendix~B.5]{FabbriGozziSwiech17} for a comprehensive overview of the theory of $\pi$-continuous semigroups. It is shown in \cite{Priola01} (see also \cite{CerraiGozzi95}) that the obtained mild solution is a $\pi$-strong approximation of strict solutions $u_n$. However, for the verification argument a $\pi$-strong approximation satisfying $u_n(t,\cdot)\in C^2(\mathbb R^d)$ is needed to apply It\^o's formula on $u_n$ and then use the dominated convergence. This may possibly be achieved as in~\cite{CerraiGozzi95} by establishing that $\{u\in C^2(\mathbb R^d): \tilde{\mathcal L} u\in C(\mathbb R^d)\}$ is a $\pi$-core for $\tilde{\mathcal L}$. The later is proven in \cite{DaPratoRoeckner11} under the assumption that $b$ and $\sigma$ have a continuous and bounded second derivative.   
\end{remark}

We are now ready to prove Theorem~\ref{thm-existence-GHS}.

\begin{proof}[Proof of Theorem~\ref{thm-existence-GHS}]
	In view of Lemma~\ref{lemma-separation} and Proposition~\ref{prop-local-existence} there exists a (unique) local classical, and hence mild solution 
\[
	u\in C^\alpha({[T-\delta,T^-]};D(\mathcal L))\cap C^{1+\alpha}({[T-\delta,T^-]};C(\mathbb R^d))
\]	
to~\eqref{pde-v}. In order to see that the local solution extends to a global solution 
\[
	v \in C^\alpha({[0,T^-]};D(\mathcal L))\cap C^{1+\alpha}({[0,T^-]};C(\mathbb R^d))
\]
notice first that the functional $v\mapsto F(\cdot,v(\cdot))$ mapping $C(\mathbb R^d)$ into itself is continuously differentiable and thus locally Lipschitz continuous. By \cite[Corollary~3.1.9]{Lunardi95}, the operator $\mathcal L$ generates an analytic semigroup in $C(\mathbb R^d)$. Hence, by \cite[Theorem~7.1.2]{Lunardi95} there exists a mild solution $v\in L^\infty(\tau,T-\delta; C(\mathbb R^d))$ to $-\partial_tv-\mathcal L v-F(y,v)=0$ for some $0\leq \tau <T-\delta$ when imposed at $t=T-\delta$ with a terminal value in $C(\mathbb R^d)$. Due to the a priori estimates established in Corollary~\ref{cor-a-priori-estimate} and \cite[Proposition~7.1.8]{Lunardi95} we may choose $\tau=0$. This gives us a global mild solution 
\[
	v\in L^\infty(0,T^-; C(\mathbb R^d))
\]
by pasting $v$ and $u$ at $T-\delta$. In order to verify the desired regularity we recall that $F$ is independent of the time variable. Hence, the regularity follows from \cite[Proposition~7.1.10(iv)]{Lunardi95} if $v(T-\delta,\cdot)\in D(\mathcal L)$ and if $\mathcal Lv(T-\delta,\cdot)+F(\cdot,v(T-\delta,\cdot))$ belongs to the real interpolation space $D_{\mathcal L}(\alpha,\infty)$. The former condition is a part of the assumption, the later is a consequence of \cite[Proposition~2.2.12(i)]{Lunardi95}. 
\end{proof}


\section{Verification argument} \label{sec-verification}

	This section is devoted to the verification argument. Throughout, for some $q>d+2$, let 
\[v\in W^{1,2}_{q,loc}((0,T^-)\times\mathbb R^d)\cap C_{poly}([0,T^-]\times\mathbb R^d)\] denote a nonnegative strong solution to~\eqref{pde-v}. We recall that by the parabolic Sobolev embedding theorems \cite[Lemma~II.3.3]{LSU68}, for every $R>0$ the parabolic Sobolev space $W^{1,2}_{q}((0,T)\times B_d(R))$ is continuous embedded into $C^{l/2,1+l}([0,T]\times\overline B_d(R))$ with $l= 1- (d+2)/q$.  Here, $B_{d}(R)$ denotes the $d$-dimensional ball of radius $R$ centered at the origin and $C^{l/2,1+l}([0,T]\times\overline B_d(R))$ denotes the usual parabolic H\"older space of functions $u(t,y)$ on $[0,T]\times\overline B_d(R)$ that are $l/2$-H\"older continuous in $t$ and  $l$-H\"older continuous in $y$ along with their first derivative in $y$. Hence, we assume from now on that $Dv$ is continuous.
	
	The verification argument is established as follows. We first prove that the candidate optimal strategy $(\xi^*,\pi^*)$ is admissible, and that the resulting portfolio process is monotone. This uses the lower estimate in~\eqref{a-priori-bounds}. Admissibility does not a priori guarantee that the strategy $(\xi^*,\pi^*)$ generates finite costs, though. This requires an extra argument.   
	
	Subsequently, we show that we may w.l.o.g.~restrict ourselves to  admissible controls that result in a monotone portfolio process. Similar to \cite{GraeweHorstQiu13,Kratz14}, we then prove the optimality of $\xi^*$ and $\pi^*$ in every interval $[t,s]$ with $s<T$. The upper estimate in~\eqref{a-priori-bounds} will be used to show that candidate strategy is optimal on the whole time interval.  
	
\begin{lemma}	\label{lemma-admissible}
	The pair of feedback controls $(\xi^*,\pi^*)$ given by \eqref{optimal-control} is admissible. The portfolio process $(X_s^*)_{s\in[t,T]}$ with respect to $(\xi^*,\pi^*)$ is monotone.
\end{lemma}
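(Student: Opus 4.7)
The plan is to take the explicit formula~\eqref{optimal-position-process} as the definition of the candidate portfolio process $X^*$, and then to read off $\xi^*$, $\pi^*$ from~\eqref{optimal-control}. Since $v\in C^{1,2}([0,T)\times\mathbb R^d)$, $\eta,\gamma$ are continuous and $\eta\geq\kappa_0>0$, the integrand $r\mapsto v(r,Y_r)^\beta/\eta(Y_r)^\beta$ has continuous sample paths on $[t,T)$, and each jump multiplier $\gamma^\beta/(\gamma^\beta+v^\beta)$ lies in $[0,1]$. Hence $(X^*_s)_{s\in[t,T)}$ defined by~\eqref{optimal-position-process} is a well-defined, adapted, c\`adl\`ag process. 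Defining $\xi^*$ via~\eqref{optimal-control} yields a progressively measurable process, defining $\pi^*$ via the left limit $X^*_{s-}$ yields a predictable process, and a direct computation confirms that $(\xi^*,\pi^*)$ indeed produces $X^*$ as the associated portfolio process, so the measurability conditions for admissibility are met.

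Monotonicity is immediate from the product representation. Writing $X^*_s = x\,e^{-I_s}\Pi_s$ with $I_s=\int_t^s v(r,Y_r)^\beta/\eta(Y_r)^\beta\,dr\ge 0$ nondecreasing and $\Pi_s$ a product of factors in $[0,1]$, it is clear that $|X^*_s|$ is nonincreasing and that $X^*_s$ retains the sign of $x$ throughout $[t,T)$.

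The only nontrivial point is the liquidation constraint $X^*_T=0$, which I would deduce from the sharp asymptotic~\eqref{order-of-error}. Since that convergence is \emph{uniform in $y$}, there exists a deterministic $\delta>0$ such that $(T-r)^{1/\beta}v(r,y)\geq \tfrac{1}{2}\eta(y)$ for every $r\in[T-\delta,T)$ and every $y\in\mathbb R^d$. This yields, on $[T-\delta,T)$,
\[
\frac{v(r,Y_r)^\beta}{\eta(Y_r)^\beta}\geq \frac{1}{2^\beta(T-r)},
\]
so $I_s\to+\infty$ almost surely as $s\to T$, hence $e^{-I_s}\to 0$ and $X^*_s\to 0$. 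Setting $X^*_T:=0$ then verifies the liquidation constraint.

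The main obstacle is this last step. The uniform-in-$y$ asymptotic~\eqref{order-of-error} makes the argument painless; without it, one would have to work pathwise from the two-sided bound~\eqref{a-priori-bounds} alone, which is considerably more delicate, since the upper bound there does not by itself force divergence of $\int_t^T v^\beta/\eta^\beta\,dr$. With the refined asymptotic in hand, the argument reduces to observing that the $1/(T-r)$ singularity of $v^\beta/\eta^\beta$ along every trajectory drives $X^*$ to zero through the exponential damping.
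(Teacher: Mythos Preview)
Your argument is correct, but the paper takes a different route for the liquidation constraint. Instead of invoking the refined asymptotic~\eqref{order-of-error}, the paper uses only the \emph{lower} bound in~\eqref{a-priori-bounds}: from
\[
v(r,y)^\beta\geq \frac{e^{-\beta\theta(T-r)}}{E\!\left[\int_r^T\eta(Y_u^{r,y})^{-\beta}\,du\right]}
\]
one bounds the conditional expectation above by $(T-r)$ times an $\omega$-dependent constant, obtaining $v(r,Y_r)^\beta/\eta(Y_r)^\beta\geq (\kappa(\omega)(T-r))^{-1}$ and hence $|X^*_s|\leq |x|\big((T-s)/(T-t)\big)^{1/\kappa}$. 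So your claim that working from~\eqref{a-priori-bounds} alone is ``considerably more delicate'' is off the mark; it is done in three lines.

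The trade-off is this: your approach via~\eqref{order-of-error} is cleaner---a deterministic $\delta$ and a universal constant $2^{-\beta}$---but~\eqref{order-of-error} is established only under the boundedness assumptions on $\eta$, $\mathcal L\eta$, and $\lambda$ (see the remark after Corollary~\ref{cor-a-priori-estimate}). The paper deliberately bases the entire verification argument, including this lemma, on~\eqref{a-priori-bounds} alone, because those estimates hold under the weaker polynomial-growth hypotheses (cf.\ the remark opening Section~\ref{sec-verification}); this is what makes the verification portable to the non-Markovian setting of Section~\ref{sec-non-markov}.
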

\begin{proof}
	One readily verifies that the portfolio process $(X_s^*)_{s\in[t,T]}$ with respect to the controls $\xi^*$ and $\pi^*$ is given by~\eqref{optimal-position-process} and thus is monotone. 
	To show that $X_T^*=0$ we define the random variable
\begin{equation*}
	\nu(\omega):=\sup\nolimits_{t\leq r\leq T}\left\{e^{\beta\theta(T-r)}\eta(Y_r^{t,y})^\beta E\left[\left.\sup\nolimits_{r\leq u\leq T}\eta(Y_u^{t,y})^{-\beta}\right|\mathcal F_r\right]\right\}
\end{equation*}
that is a.s.\ finite due to Assumption~\ref{A2} and the moment estimates \eqref{moment-estimate}. For $0\leq t\leq s<T$, using the lower estimate in~\eqref{a-priori-bounds} and Jensen's inequality we obtain,
\begin{equation*} 
	\begin{aligned}
	|X_s^*| &\leq|x|\exp\left(-\int_{t}^s\frac{v(r,Y_r^{t,y})^\beta}{\eta(Y_r^{t,y})^\beta}\,dr\right) \\
	&\leq|x| \exp\left(-\int_t^s\frac{1}{\eta(Y_r^{t,y})^\beta} E\left[\left.\frac{e^{-\theta(T-r)}}{\sqrt[\beta]{\int_r^T\eta(Y_u^{t,y})^{-\beta}\,du}}\right|\mathcal F_r\right]^\beta dr\right) \\
	&\leq|x| \exp\left(-\int_t^s\frac{e^{-\beta\theta(T-r)}\eta(Y_r^{t,y})^{-\beta}}{ E\left[\left.\int_r^T\eta(Y_u^{t,y})^{-\beta}\,du\right|\mathcal F_r\right]}\,dr\right) \\	
	&\leq |x| \exp\left(-\frac{1}{\nu}\int_t^s\frac{1}{T-r}\,dr\right)=|x|\left(\frac{T-s}{T-t}\right)^{1/\nu}\stackrel{s\rightarrow T}{\longrightarrow}0.
	\end{aligned}
\end{equation*}
This yields $X_{T-}^*=0$. Hence, $\pi_T^*=0$ and so $X_T^*=0$.
\end{proof}
	
	 	 
\begin{lemma} \label{lemma-suboptimal}
	For every $(\xi,\pi)\in\mathcal A(t,x)$ there exists $(\bar\xi,\bar\pi)\in\mathcal A(t,x)$ with lesser or equal costs such that $(X_s^{\bar\xi,\bar\pi})_{s\in[t,T]}$ is monotone.
\end{lemma}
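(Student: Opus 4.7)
By the sign symmetry $(x,\xi,\pi)\mapsto(-x,-\xi,-\pi)$ I may assume $x\geq 0$; the case $x=0$ is trivial via $\bar\xi\equiv\bar\pi\equiv 0$, so I assume $x>0$. The idea is to keep only the ``selling'' part of $(\xi,\pi)$, truncated as soon as the cumulative sells first reach the initial position $x$. Decomposing $\xi=\xi^+-\xi^-$, $\pi=\pi^+-\pi^-$ and setting
\[
 A_s:=\int_t^s\xi_r^+\,dr+\int_t^s\pi_r^+\,dN_r, \qquad B_s:=\int_t^s\xi_r^-\,dr+\int_t^s\pi_r^-\,dN_r,
\]
the portfolio rewrites as $X_s^{\xi,\pi}=x-A_s+B_s$, so the liquidation constraint becomes $A_T=x+B_T\geq x$. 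Consequently the stopping time $\tau:=\inf\{s\geq t:A_s\geq x\}$ satisfies $\tau\leq T$.

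I would then define the modified strategy
\[
 \bar\xi_s:=\xi_s^+\,\mathbf{1}_{\{s<\tau\}}, \qquad \bar\pi_s:=\pi_s^+\wedge(x-A_{s-})^+.
\]
The first check is admissibility: $\bar\xi$ is progressively measurable (since $\tau$ is a stopping time) and $\bar\pi$ is predictable (since the left-limit $A_{s-}$ is predictable and $\pi^+$ is predictable). A short direct computation then gives the following: on $[t,\tau)$ one has $A_{s-}<x$, so the cap in $\bar\pi$ is inactive and $\bar X_s^{\bar\xi,\bar\pi}=x-A_s>0$; at $s=\tau$ the cap brings $\bar X_\tau$ exactly to $0$ whether $\tau$ is reached continuously by $\xi^+$ or by a jump of $\pi^+$; and $\bar X_s\equiv 0$ for $s>\tau$ because then $\bar\xi_s=0$ and $A_{s-}\geq x$ forces $\bar\pi_s=0$. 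Hence $(\bar\xi,\bar\pi)\in\mathcal A(t,x)$ and the corresponding portfolio is nonincreasing, hence monotone.

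For the cost comparison I would argue pointwise domination of the integrands of $J$. By construction $|\bar\xi_s|\leq\xi_s^+\leq|\xi_s|$ and $|\bar\pi_s|\leq\pi_s^+\leq|\pi_s|$, and on $[t,\tau)$ the identity $X_s=\bar X_s+B_s$ together with $B_s\geq 0$ and $\bar X_s\geq 0$ yields $0\leq\bar X_s\leq X_s$, whence $|\bar X_s|^p\leq|X_s|^p$; on $[\tau,T]$ the same inequality is trivial since $\bar X_s=0$. Integrating in $(s,\omega)$ gives $J(t,y,x;\bar\xi,\bar\pi)\leq J(t,y,x;\xi,\pi)$. The main technical hurdle is the measurability of the new controls: predictability of $\bar\pi$ is exactly what the left-limit $A_{s-}$ buys, and progressive measurability of $\bar\xi$ follows from $\mathbf{1}_{\{s<\tau\}}$ being left-continuous and adapted. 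A secondary subtlety is the boundary case when $\tau$ coincides with a jump of $N$ with $\pi_\tau^+>0$; the cap $\pi_s^+\wedge(x-A_{s-})^+$ is designed precisely to bring $\bar X_\tau$ to $0$ while still being dominated by $|\pi_\tau|$.
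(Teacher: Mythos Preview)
Your argument is correct and is essentially the paper's own construction, only written out more explicitly: the paper defines $\bar\xi_s=\xi_s^+\,\mathbf{1}_{\{X_s^{\bar\xi,\bar\pi}>0\}}$ and $\bar\pi_s=(\pi_s\wedge X_{s-}^{\bar\xi,\bar\pi})\,\mathbf{1}_{\{\pi_s\geq 0\}}\,\mathbf{1}_{\{X_{s-}^{\bar\xi,\bar\pi}>0\}}$, which coincides with your $(\bar\xi,\bar\pi)$ once one observes that $\{X_s^{\bar\xi,\bar\pi}>0\}=\{s<\tau\}$ and $X_{s-}^{\bar\xi,\bar\pi}=(x-A_{s-})^+$. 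Your formulation via the cumulative process $A$ and the hitting time $\tau$ has the advantage of avoiding the apparent circularity in the paper's definition (where $\bar\xi,\bar\pi$ are written in terms of $X^{\bar\xi,\bar\pi}$), and it makes the measurability checks transparent; otherwise the two proofs are the same.
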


\begin{proof}
	For $(\xi,\pi)\in\mathcal A(t,x)$ with $x\geq 0$ consider the strategy $(\bar\xi,\bar\pi)$ given by
\begin{equation*} 
	\bar\xi_s=\xi_{s} 1_{\{\xi_s\geq0\}} 1_{\{X_s^{\bar\xi,\bar\pi}>0\}}
\quad \text{ and } \quad	\bar\pi_s=(\pi_s\wedge X_{s-}^{\bar\xi,\bar\pi}) 1_{\{\pi_s\geq0\}} 1_{\{X_{s-}^{\bar\xi,\bar\pi}>0\}}.
\end{equation*}
By construction it holds that $0\leq\bar\xi_s\leq|\xi_s|$, $0\leq\bar\pi_s\leq|\pi_s|$ and $0\leq X_s^{\bar\xi,\bar\pi}\leq |X_s^{\xi,\pi}|$ for all $s\in[t,T]$. As a result, $(X_s^{\bar\xi,\bar\pi})_{s\in[t,T]}$ is monotone decreasing and $X_T^{\bar\xi,\bar\pi}=0$ by admissibility of $(\xi,\pi)$. Hence, $(\bar\xi,\bar\pi)\in\mathcal A(t,x)$ with less or equal costs than $(\xi,\pi)$. The case $x\leq 0$ is similar. 
\end{proof}

	We denote by $\bar{\mathcal A}(t,x)$ the set of all admissible controls under which the portfolio process is monotone. For any $(\xi,\pi)\in\bar{\mathcal A}(t,x)$ with finite costs the expected residual costs vanish as $s\rightarrow T$ as shown by the following lemma.	
\begin{lemma} \label{lemma-finite-costs}
	Under assumption (A3), for every $(\xi,\pi)\in\bar{\mathcal A}(t,x)$ with finite costs it holds that
\begin{equation} \label{end-costs-vanish}
	E\left[v(s,Y_s^{t,y})|X_s^{\xi,\pi}|^p\right] \longrightarrow 0, \quad \text{$s\rightarrow T$.}
\end{equation}
\end{lemma}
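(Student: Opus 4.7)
The plan is to use the uniform asymptotic~\eqref{order-of-error} of $v$ near $T$ to reduce the claim to $E[|X_s|^p] = o((T-s)^{p-1})$, and then to establish this rate via It\^{o}'s formula applied to $|X_r|^p$, exploiting monotonicity of $|X|$ to control the Poisson jumps.

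First I would combine~\eqref{order-of-error} with the boundedness of $\eta$ to obtain a uniform bound $v(s,y) \leq M(T-s)^{-(p-1)}$ for $s$ in some left neighborhood of $T$ and all $y$. This reduces the statement to showing that $E[|X_s^{\xi,\pi}|^p] / (T-s)^{p-1} \to 0$. Without loss of generality I take $x \geq 0$ (the case $x \leq 0$ is symmetric), so that monotonicity of the portfolio gives $X_s \in [0,x]$, $\xi_r \geq 0$ and $0 \leq \pi_r \leq X_{r-}$.

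Next, I would apply It\^{o}'s formula to $f(X_r) = X_r^p$ on $[s,T]$ (valid since $f \in C^1([0,\infty))$ for $p > 1$ and $X$ has bounded variation). Using the liquidation constraint $X_T = 0$ this yields
\[
X_s^p = p\int_s^T X_r^{p-1}\xi_r\,dr + \int_s^T\bigl[X_{r-}^p - (X_{r-}-\pi_r)^p\bigr]\,dN_r,
\]
with both summands nonnegative. Taking expectations and replacing $dN_r$ by its compensator $\theta\,dr$ gives
\[
E[X_s^p] = p\,E\!\int_s^T X_r^{p-1}\xi_r\,dr + \theta\,E\!\int_s^T\bigl[X_{r-}^p - (X_{r-}-\pi_r)^p\bigr]\,dr.
\]
For the jump term, convexity of $t \mapsto t^p$ gives $X_{r-}^p - (X_{r-}-\pi_r)^p \leq p X_{r-}^{p-1}\pi_r$, and then $\pi_r \leq X_{r-} \leq X_s$ yields $\leq p X_s^p$, so that the jump contribution is bounded by $p\theta(T-s)\,E[X_s^p]$. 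For the continuous term, $X_r \leq X_s$ together with two applications of H\"{o}lder's inequality (first on the $dr$-integral with exponents $p/(p-1)$ and $p$, then on the expectation) gives
\[
p\,E\!\int_s^T X_r^{p-1}\xi_r\,dr \;\leq\; p(T-s)^{(p-1)/p}\bigl(E[X_s^p]\bigr)^{(p-1)/p}\Bigl(E\!\int_s^T\xi_r^p\,dr\Bigr)^{1/p}.
\]

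Finally, for $s$ so close to $T$ that $p\theta(T-s) \leq 1/2$, I would absorb the jump contribution into the left-hand side and divide through by $(E[X_s^p])^{(p-1)/p}$ to obtain
\[
E[X_s^p] \;\leq\; (2p)^p(T-s)^{p-1}\,E\!\int_s^T\xi_r^p\,dr.
\]
Since $\eta \geq \kappa_0 > 0$ and $E\int_t^T \eta(Y_r)|\xi_r|^p\,dr < \infty$ by the finite-cost assumption, dominated convergence gives $E\int_s^T\xi_r^p\,dr \to 0$, whence \eqref{end-costs-vanish} follows. I expect the main obstacle to be exactly this jump term: a naive Minkowski-type bound that treats $\int_s^T \pi_r\,dN_r$ separately (using only $\pi_r \leq x$ and Poisson moment estimates) yields a rate of $(T-s)^{2-p}$ which is useless for $p \geq 2$. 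The It\^{o} argument circumvents this obstacle by keeping $\pi_r$ coupled to $X_r$, so that the contribution of the jumps is of the form $(T-s)\,E[X_s^p]$ and can be absorbed on the left-hand side rather than estimated in isolation.
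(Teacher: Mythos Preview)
Your argument is correct. The It\^o formula for $X_r^p$ (valid for $C^1$ functions since $X$ is of finite variation and nonnegative), the convexity/monotonicity bound on the jump increment, the two H\"older steps, and the absorption of the jump term all go through as written, and the final appeal to $\eta\geq\kappa_0$ to turn finite costs into $E\int_s^T\xi_r^p\,dr\to0$ is clean.

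The paper, however, argues somewhat differently. Instead of invoking the sharp asymptotic~\eqref{order-of-error} (which relies on boundedness of $\eta$), it uses only the cruder upper estimate in~\eqref{a-priori-bounds}; instead of your direct It\^o-plus-absorption on $X_r^p$, it first splits $|X_s|^p$ via Jensen into a $\xi$-part and a Poisson part, applies It\^o to the Poisson part alone, and then closes with Gronwall to obtain $|X_s|^p\leq C\,E\bigl[\bigl|\int_s^T\xi_r\,dr\bigr|^p\,\big|\,\mathcal F_s\bigr]$; finally it combines this with the (random) a~priori bound through the \emph{reverse} H\"older inequality, so that the factor $(T-s)^{-p}$ is cancelled by the $\xi$-integral even when $\eta$ is unbounded. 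The trade-off is that your route is shorter and more transparent, while the paper's route avoids using both~\eqref{order-of-error} and the lower bound $\eta\geq\kappa_0$, so that the verification argument extends to merely polynomially growing coefficients (cf.\ the Remark opening Section~\ref{sec-verification}) and, in turn, to the non-Markovian BSDE setting of Section~\ref{sec-non-markov}.
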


\begin{proof} 
	The monotonicity of $(X_s^{\xi,\pi})_{s\in[t,T]}$ together with the terminal condition $X_T^{\xi,\pi}=0$ implies
\begin{equation}  \label{alaba}
	|x|\geq |X_{s-}^{\xi,\pi}| \geq|X_s^{\xi,\pi}| \geq\left|\int_s^T\pi_r\,d N_r\right| \quad \text{and} \quad
	|X_{s-}^{\xi,\pi}|\geq|\pi_s|
\end{equation}
for all $t\leq s\leq T$, and moreover by Jensen's inequality
\begin{equation} \label{thiago}
	|X_s^{\xi,\pi}|^p\leq2^{p-1}\bigg(\left|\int_s^T\xi_r\,dr\right|^p+\left|\int_s^T\pi_r\,d N_r\right|^p\bigg).
\end{equation}
By It\^{o}'s formula,
\begin{equation*}
	\left|\int_s^T\pi_r\,d N_r\right|^p =\int_s^T\bigg\{\left|\int_r^T\pi_u\,d N_u+\pi_r\right|^p-\left|\int_r^T\pi_u\,d N_u\right|^p\bigg\}\,dN_r.
\end{equation*}
Using once more Jensen's inequality, and then \eqref{alaba}, we obtain
\begin{align}\label{lewi}
	\left|\int_s^T\pi_r\,d N_r\right|^p&\leq \int_s^T\bigg\{ (2^{p-1}-1)\left|\int_r^T\pi_u\,d N_u\right|^p+2^{p-1}\left|\pi_r\right|^p\bigg\}d N_r \nonumber\\
&\leq \int_s^T(2^{p}-1) |X_{r-}^{\xi,\pi}|^p\,d N_r.
\end{align}
As the integrand in \eqref{lewi} is bounded by $(2^p-1)|x|$ its jump integral has a true martingale part. Hence, by decomposing the jump integral in \eqref{lewi} into its martingale part with respect to the compensated Poisson process $N_t-\theta t$ and its deterministic part we obtain from \eqref{thiago},
\begin{equation*}
	|X_s^{\xi,\pi}|^p \leq 2^{p-1}E\left[\left.\left|\int_s^T\xi_r\,dr\right|^p+\int_s^T(2^{p}-1) |X_r^{\xi,\pi}|^p\,\theta dr\right|\mathcal F_s\right],
\end{equation*}
which implies by Gronwall's inequality the existence of a constant $C>0$ such that
\begin{equation*}
|X_s^{\xi,\pi}|^p\leq CE\left[\left.\left|\int_s^T\xi_r\,dr\right|^p\right|\mathcal F_s\right].
\end{equation*}
Next, we apply again Jensen's inequality to obtain,
\[
|X_s^{\xi,\pi}|^p\leq C (T-t)^{p-1}E\left[\left.\int_s^T|\xi_r|^p\,dr\right|\mathcal F_s\right].
\]
Therefore, by the upper estimate in \eqref{a-priori-bounds} and the boundedness of $\eta$ and $\lambda$ due to (A2),
\begin{align*} 
 E\left[v(s,Y_s^{t,y})|X_s^{\xi,\pi}|^p\right]&\leq C E\left[\frac{E\left[\left.\int_s^T\eta(Y_r^{t,y})+(T-r)^p\lambda(Y_r^{t,y})\,dr\right|\mathcal F_s\right]}{T-s}E\left[\left.\int_s^T|\xi_r|^p\,dr\right|\mathcal F_s\right]\right]\\
	&\leq \tilde C E\left[\int_s^T|\xi_r|^p\,dr\right].
\end{align*}
Letting $s\rightarrow T$, we conclude \eqref{end-costs-vanish} by the monotone convergence theorem, where it is used that $\xi\in L^p_{\mathcal F}(0,T;\mathbb R)$ for any strategy $(\xi,\pi)$ that has finite costs as $\eta$ is bounded away from zero under assumption (A2). 
\end{proof}


The following estimate is key to the verification argument. Together with the preceding lemma it allows us to show that $v(\cdot,\cdot)|\cdot|^p$ is indeed equal to the value function associated with our control problem. 

\begin{lemma} \label{generalized-ito}
	Under assumption (A1), for ever $(\xi,\pi)\in\bar{\mathcal A}(t,x)$ and $s\in[t,T)$ it holds,
\begin{equation*} 
	v(t,y)|x|^p \leq E\left[v(s,Y_s^{t,y})|X_s^{\xi,\pi}|^p\right]+ E\left[\int_t^sc(Y_r^{t,y},X_r^{\xi,\pi},\xi_r,\pi_r)\,dr \right].
\end{equation*}
\end{lemma}

\begin{proof}
		Let us denote by $B(y,R)$ the open ball with radius $R>0$ centered at $y \in \mathbb{R}^d$ and introduce the first exit time 
\[	
	\tau_R=\inf\{r\geq t: Y^{t,y}_r \notin B(y,R)\}. 
\]		
Since $v\in W^{1,2}_{q}((t,s)\times B(y,R))$ and $Y^{t,y}$ is non-degenerated, due to assumption~(A1), Krylov's generalized It\^o formula~\cite[Theorem~2.10.1]{Krylov80} applies to the stopped process $v(s\wedge\tau_R,Y_{s\wedge\tau_R}^{t,y})$. It yields that
\begin{equation*}
	v(t,y)= v(s\wedge\tau_R,Y_{s\wedge\tau_R}^{t,y})+ \int_t^{s\wedge\tau_R} \partial_t v(r,Y_r^{t,y})+\mathcal Lv(r,Y_r^{t,y})\,dr - \int_t^{s\wedge\tau_R}\sigma(Y_r^{t,y}) Dv(r,Y_r^{t,y})\,dW_r.
\end{equation*}
This allows us to apply to $v(s\wedge\tau_R,Y_{s\wedge\tau_R}^{t,y})|X_{s\wedge\tau_R}^{\xi,\pi}|^p$ the classical integration by parts formula for semimartingales \cite[Theorem~4.57]{JacodShiryaev03} in order to obtain
\begin{multline*}
	v(t,y)|x|^p = v(s\wedge\tau_R,Y_{s\wedge\tau_R}^{t,y})|X_{s\wedge\tau_R}^{\xi,\pi}|^p - \int_t^{s\wedge\tau_R}\big\{ \partial_tv(r,Y_r^{t,y})|X_r^{\xi,\pi}|^p+\mathcal Lv(r,Y_r^{t,y})|X_r^{\xi,\pi}|^p \\
	-p\xi_r v(r,Y_r^{t,y}) \sgn(X_r^{\xi,\pi})|X_r^{\xi,\pi}|^{p-1}+\theta v(r,Y_r^{t,y})(|X_r^{t,x}-\pi_r|^p-|X_r^{t,x}|^p)\big\}\,dr\\ 
	-\int_t^{s\wedge\tau_R}\sigma(Y_r^{t,y}) Dv(r,Y_r^{t,y})|X_r^{\xi,\pi}|^p\,dW_r  -\int_t^{s\wedge\tau_R} v(r,Y_r^{t,y})(|X_{r-}^{\xi,\pi}-\pi_r|^p-|X_{r-}^{\xi,\pi}|^p)\,d\widetilde N_r,
\end{multline*}
where $\widetilde N_r= N_r-\theta r$ denotes the compensated Poisson process. Both $v$ and $Dv$ are continuous and hence bounded on $[0,s]\times\overline B(y,R)$. Furthermore, $|X^{\xi,\pi}|\leq |x|$ and $|\pi|\leq |x|$, due to the monotonicity of the portfolio process. As a consequence, the above stochastic integrals are true martingales. Hence, recalling~\eqref{optimal-hamiltonian},
\begin{align}
	v(t,y)|x|^p 
	&=E\left[v(s\wedge\tau_R,Y_{s\wedge\tau_R}^{t,y})|X_{s\wedge\tau_R}^{\xi,\pi}|^p\right] +E\left[\int_t^{s\wedge\tau_R} c(Y_r^{t,y},X_r^{\xi,\pi},\xi_r,\pi_r)\,dr \right] \nonumber\\
	&\quad-E\left[\int_t^{s\wedge\tau_R} \big\{ (\partial_t+\mathcal L) v(r,Y_r^{t,y})|X_r^{\xi,\pi}|^p +H(r,Y_r^{t,x},X_r^{\xi,\pi},\xi_r,\pi_r,v(\cdot,\cdot)|\cdot|)\big\}\,dr \right]\nonumber\\
	&\leq  E\left[v(s\wedge\tau_R,Y_{s\wedge\tau_R}^{t,y})|X_{s\wedge\tau_R}^{\xi,\pi}|^p\right]  +E\left[\int_t^{s\wedge\tau_R} c(Y_r^{t,y},X_r^{\xi,\pi},\xi_r,\pi_r)\,dr \right] \label{suboptimal-estimate}\\
	&\quad- E\left[\int_t^{s\wedge\tau_R}\left\{ (\partial_t+\mathcal L) v(r,Y_r^{t,y})+F(Y_r^{t,y}, v(r,Y_r^{t,y}))\right\}|X_r^{\xi,\pi}|^p \,dr\right].\nonumber
\end{align}
Since $v$ satisfies \eqref{inflator-pde} a.e.,  since $Y^{t,y}$ is non-degenerated, and because $|X^{\xi,\pi}|\leq |x|$, if follows from Krylov's estimate~\cite[Theorem~2.4]{Krylov80} that
\[
E\left[\int_t^{s\wedge\tau_R}\left\{ (\partial_t+\mathcal L) v(r,Y_r^{t,y})+F(Y_r^{t,y}, v(r,Y_r^{t,y}))\right\}|X_r^{\xi,\pi}|^p \,dr\right]=0.
\]
Hence,
\[
	v(t,y)|x|^p \leq 
	E\left[v(s\wedge\tau_R,Y_{s\wedge\tau_R}^{t,y})|X_{s\wedge\tau_R}^{\xi,\pi}|^p\right] + 
	E\left[\int_t^{s\wedge\tau_R} c(Y_r^{t,y},X_r^{\xi,\pi},\xi_r,\pi_r)\,dr \right].
\]
Letting $R\rightarrow \infty$ the assertion follows from the polynomial growth condition on $v$ and positivity of the cost function $c$.
%
\end{proof}

	We are now ready to carry out the verification argument.

\begin{proof}[Proof of Proposition~\ref{thm-verifcation}]
%
Let $(\xi,\pi)\in\bar{\mathcal A}(t,x)$. 
By Lemma~\ref{generalized-ito} and Lemma~\ref{lemma-finite-costs} letting $s \to T$
(assuming w.l.o.g.\ that $(\xi,\pi)$ has finite costs) we get
\begin{equation*} 
 v(t,y)|x|^p\leq J(t,y,x;\xi,\pi).
\end{equation*}

	Finally note by Lemma~\ref{lemma-hjb} that equality holds in \eqref{suboptimal-estimate} if $\xi=\xi^*$ and $\pi=\pi^*$. Since $v$ and $c$ are both nonnegative this implies that 
\begin{equation} \label{optimal}
	v(t,y)|x|^p \geq \lim_{s \to T} E\left[\int_t^sc(Y_r^{t,y},X_r^{\xi^*,\pi^*},\xi_r^*,\pi_r^*)\,dr\right] = J(t,y,x;\xi^*,\pi^*).
\end{equation} 
In particular $(\xi^*,\pi^*)$ has finite costs. Hence, Lemma~\ref{lemma-finite-costs} applies to $(\xi^*,\pi^*)$. Thus,
\begin{equation*} \label{verification-in-expectation*}
\begin{split}	
	v(t,y)|x|^p &=  E[v(s,Y_s^{t,y})|X_s^{\xi^*,\pi^*}|^p]+ E\left[\int_t^sc(Y_r^{t,y},X_r^{\xi^*,\pi^*},\xi^*_r,\pi^*_r)\,dr\right] \\
	& \longrightarrow J(t,y,x;\xi^*,\pi^*)   \quad \text{as } s \to T.
\end{split}
\end{equation*}
This shows that the strategy $(\xi^*,\pi^*)$ is indeed optimal. 
%
\end{proof}

\section{Uniqueness in the non-Markovian framework} \label{sec-non-markov}

	Within our Markovian framework we obtained optimal controls in feedback form. Of course, one may as well interpret the cost coefficients as processes $\eta_t$, $\gamma_t$ and $\lambda_t$ adapted to the filtration generated by the Brownian motion. This has been recently suggested by Ankirchner, Jeanblanc \& Kruse~\cite{AnkirchnerJeanblancKruse14}, which allowed them to analyze non-Markovian coefficients, while losing the feedback form of the optimal controls. 
	
	Disregarding in this section any passive orders and assuming the filtration to be solely generated by the Brownian motion the value function to the control problem consider in~\cite{AnkirchnerJeanblancKruse14} is given by
\[
V_t(x):=\essinf_{\xi\in\mathcal A(t,x)}\Et{\int_t^T\eta_s|\xi_s|^p+\lambda_s|X_s^\xi|^p\,ds}, \qquad (t,x)\in[0,T)\times\mathbb R,
\]
where $\xi\in L_{\mathcal F}^0(t,T;\mathbb R)$ belongs to the set of admissible controls $\mathcal A(t,x)$ if the state process 
\[ 
	X_s^\xi=x-\int_t^s\xi_r\,dr, \qquad t\leq s\leq T,
\]
satisfies the liquidation constraint $X_T^\xi=0$. In the non-Markovian framework the value function has been related by Peng~\cite{Peng92} (see also \cite{GraeweHorstQiu13,HorstQiuZhang16}) to the BSPDE:
\begin{equation*}
-dV_t(x)=\inf_{\xi\in\mathbb R}\left\{-\xi\nabla V_t(x)+\eta_t|\xi|^p+\lambda_t|x|^p\right\} dt-\Psi_t(x)\,dW_t.
\end{equation*}
A solution to the BSPDE is a pair $(V_t,\Psi_t)$ of adapted processes. The ansatz $V_t(x)=Y_t|x|^p$ and $\Psi_t(x)=Z_t|x|^p$ results in the BSDE:
\begin{equation} \label{BSDE}
	-dY_t=\left\{\lambda_t-\frac{|Y_t|^{\beta+1}}{\beta\eta_t^\beta}\right\}dt-Z_t\,dW_t, \quad 0\leq t<T; \quad \lim_{t\rightarrow T} Y_t=+\infty.
\end{equation}
Assuming $\eta\in L^2_\mathcal F(0,T;\mathbb R_+)$, $\eta^{-\beta}\in L^1_{\mathcal F}(0,T;\mathbb R_+)$, $\lambda \in L^2_{\mathcal F}(0,T^-;\mathbb R_+)$, and $E[\int_0^T(T-t)^p\lambda_t\,dt]<\infty$ existence of a minimal nonnegative solution 
\[
	(Y,Z)\in L_{\mathcal F}^2(\Omega;C([0,T^-];\mathbb R_+))\times L_\mathcal F^2(0,T^-;\mathbb R^n)
\] to \eqref{BSDE} has been established in~\cite{AnkirchnerJeanblancKruse14}. 
We now show how our uniqueness argument can be applied to establish uniqueness in the BSDE setting when the coefficients are continuous. While the general shifting argument in the proof of our comparison principle fails in a non-time-homogeneous setting, it can be applied to establish the following a priori estimates.
\begin{proposition} \label{a-priori-bound}
	For any nonnegative solution $(Y,Z)$ to~\eqref{BSDE} the following estimates hold:
\begin{equation} \label{bsde-estimate}
	\frac{1}{\sqrt[\beta]{\Et{\int_t^T\frac{1}{\eta_s^\beta}\,ds}}}\leq Y_t\leq\frac{1}{(T-t)^p}\Et{\int_t^T\eta_s+(T-s)^p\lambda_s\,ds}=:\overline {Y_t}, \quad 0\leq t<T.
\end{equation}
\end{proposition}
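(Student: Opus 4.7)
Both estimates will follow by applying It\^o's formula to suitable power transforms of $Y$ and exploiting the submartingale structure that emerges; in contrast to the Markovian case, one cannot apply the time-shift trick directly, so one relies instead on the exact cancellations permitted by the generator $g_t(y)=\lambda_t-y^{\beta+1}/(\beta\eta_t^\beta)$.

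For the lower bound, apply It\^o's formula to $Y_t^{-\beta}$. The singular nonlinear term cancels identically, since $-\beta Y^{-\beta-1}\cdot Y^{\beta+1}/(\beta\eta^\beta)=-\eta^{-\beta}$, giving
\begin{equation*}
	d(Y_t^{-\beta}) = \left[\beta\lambda_t Y_t^{-\beta-1} - \eta_t^{-\beta} + \tfrac{\beta(\beta+1)}{2}Y_t^{-\beta-2}|Z_t|^2\right] dt - \beta Y_t^{-\beta-1}Z_t\,dW_t.
\end{equation*}
Since the $\lambda$-term and the It\^o correction are nonnegative, the drift is bounded below by $-\eta_t^{-\beta}$, so $Y_t^{-\beta}+\int_0^t\eta_s^{-\beta}\,ds$ is a local submartingale. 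Localizing to render the stochastic integral a true martingale, integrating over $[t,\tau]$ for $\tau<T$ and taking conditional expectations yields $Y_t^{-\beta}\leq\Et{Y_\tau^{-\beta}+\int_t^\tau\eta_s^{-\beta}\,ds}$. Letting $\tau\uparrow T$, the singular terminal condition $Y_\tau\to+\infty$ forces $Y_\tau^{-\beta}\to 0$, and monotone convergence on the remaining integral delivers the lower bound upon inverting $y\mapsto y^{-\beta}$.

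For the upper bound, apply It\^o's formula to $(T-t)^p Y_t+C_t$ with $C_t:=\int_0^t(\eta_s+(T-s)^p\lambda_s)\,ds$; after cancellation of the $(T-t)^p\lambda_t$ contributions the drift becomes $\eta_t-p(T-t)^{p-1}Y_t+(T-t)^pY_t^{\beta+1}/(\beta\eta_t^\beta)$. Writing $a=\eta_t$, $s=T-t$, $y=Y_t$, the key pointwise inequality is
\begin{equation*}
	f(y):=a-ps^{p-1}y+\frac{s^p y^{\beta+1}}{\beta a^\beta}\geq 0\quad\text{for }y\geq 0,
\end{equation*}
which follows from the convexity of $f$: the stationary point $y^*=a/s^{1/\beta}$ (obtained from $p\beta/(\beta+1)=1$) is a minimum, and direct substitution using the cancellations $p-1-1/\beta=0$ and $1-p+1/\beta=0$ collapses $f(y^*)$ to zero. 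Hence $(T-t)^pY_t+C_t$ is a local submartingale, yielding
\begin{equation*}
	(T-t)^p Y_t \leq \Et{(T-\tau)^p Y_\tau+\int_t^\tau (\eta_s+(T-s)^p\lambda_s)\,ds}.
\end{equation*}
To pass to $\tau\uparrow T$ one needs $\Et{(T-\tau)^p Y_\tau}\to 0$; since this decay rate is precisely the content of the upper bound, invoke the minimal-solution construction of \cite{AnkirchnerJeanblancKruse13}, writing $Y=\sup_n Y^n$ for BSDE solutions $Y^n$ with finite terminal $Y^n_T=n$. The identical computation applied to $Y^n$ gives $Y^n_t\leq\overline Y_t$, with $(T-\tau)^p Y^n_\tau\to 0$ trivially, and monotone convergence concludes $Y_t\leq\overline Y_t$.

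The main obstacles are twofold. First, the stochastic integrals $\int\beta Y^{-\beta-1}Z\,dW$ and $\int(T-s)^p Z\,dW$ must be shown to be true martingales on $[t,\tau]$; this is handled by standard localization using the positivity of $Y$ (bootstrapped from the lower bound) and the $L^2$-integrability of $Z$ on compact subintervals built into the solution space. Second, and more fundamentally, the upper bound argument requires the detour through the approximating family $\{Y^n\}$: for a general nonnegative solution the decay $(T-\tau)^p Y_\tau\to 0$ is not a priori available and is precisely what the estimate is designed to quantify, so a direct passage to the limit would be circular.
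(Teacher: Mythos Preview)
Your lower-bound argument via It\^o's formula on $Y^{-\beta}$ is fine and is essentially the route taken in \cite{AnkirchnerJeanblancKruse13}, which the paper simply cites. The upper bound, however, has a genuine gap. The proposition is stated for \emph{any} nonnegative solution $(Y,Z)$, and its whole purpose is to feed into the uniqueness theorem that follows; you therefore cannot assume that $Y$ coincides with the minimal solution. Yet the representation $Y=\sup_n Y^n$ that you invoke is precisely the \emph{construction} of the minimal solution in \cite{AnkirchnerJeanblancKruse13}: for a general nonnegative solution one only has $Y\geq\sup_n Y^n$ (by comparison with the finite terminal data $Y^n_T=n\leq Y_T=+\infty$), never equality a priori. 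Your submartingale computation thus yields $Y^{\min}_t\leq\overline Y_t$, not $Y_t\leq\overline Y_t$, and monotone convergence over $n$ cannot close the gap because it sits on the wrong side of the inequality. You diagnose the circularity correctly in your final paragraph, but the proposed detour through $\{Y^n\}$ is itself circular: it presupposes uniqueness.

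The paper escapes this trap by shifting the singularity of the \emph{supersolution} rather than approximating the solution from below. One introduces the time-shifted processes $\overline Y^\delta$ (the same formula as $\overline Y$ but with horizon $T-\delta$), which are supersolutions to the BSDE blowing up at $T-\delta$. On $[0,T-\delta]$ an arbitrary nonnegative solution $Y$ is a.s.\ bounded, since it lies in $L^2_{\mathcal F}(\Omega;C([0,T-\delta];\mathbb R_+))$; hence the comparison of $Y$ against $\overline Y^\delta$ at the new terminal time $T-\delta$ is automatic (infinite supersolution versus finite solution). A linear-BSDE representation together with Fatou's lemma then gives $Y_t\leq\overline Y^\delta_t$ for all $t<T-\delta$, and letting $\delta\downarrow 0$ concludes. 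The structural point is that approximating the supersolution by earlier blow-up times dominates \emph{every} solution, whereas approximating from below via $\{Y^n\}$ can only reach the minimal one.
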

\begin{proof} The lower estimate has already been established in \cite{AnkirchnerJeanblancKruse14} for the minimal nonnegative solution. To establish the upper estimate first note that $(\overline Y_t)_{t\in[0,T)}$ is a supersolution to~\eqref{BSDE}. Yet, one can not directly compare $\overline Y$ and $Y$ at the terminal time.	As a workaround, similarly as in \cite{Popier06}, we modify Pardoux's proof of his comparison principle for BSDEs with monotone drivers \cite[Theorem 2.4]{Pardoux99} by shifting the singularity of $\overline Y$. That is, for $\delta>0$ we define $(\overline{Y_t}^\delta)_{t\in[0,T-\delta)}$ by
\[\overline{ Y_t}^\delta=\frac{1}{(T-\delta-t)^p}\Et{\int_t^{T-\delta}\eta_s+(T-\delta-s)^p\lambda_s\,ds}.\]
These processes are again supersolutions to \eqref{BSDE} but with the singularity at $t=T-\delta$. Precisely, it holds that
\[-d\overline{Y_t}^\delta=\underbrace{\lambda_t+\frac{\eta_t}{(T-\delta-t)^p}-\frac{p\overline{Y_t}^\delta}{T-\delta-t}}_{=: g^\delta(t,\overline{Y_t}^\delta)}\,dt-\overline{Z_t}^\delta\,dW_t, \qquad 0\leq t<T-\delta,\]
for some $\overline Z^\delta\in\bigcap_{t\in[0,T-\delta)} L^2_{\mathcal F}(0,t;\mathbb R^n)$ given by the Martingale Representation Theorem with the singular terminal value \[\lim_{t\rightarrow T-\delta}\overline{Y_t}^\delta=+\infty.\]
A calculation as in~\eqref{tedious} verifies that for all $0\leq t<T-\delta$ and $y\in\mathbb R$,
\[g^\delta(t,y)\geq \lambda_t-\frac{|y|^{\beta+1}}{\beta\eta_t^\beta}=:f(t,y).\]

	We now consider the difference of $Y$ and $\overline Y^\delta$ for $0\leq t\leq s<T-\delta$:
\begin{align*}
\overline{Y_t}^\delta-Y_t&=\Et{\overline{Y_s}^\delta-Y_s+\int_t^s g^\delta(r,\overline{Y_r}^\delta)\,dr-\int_t^s f(r,Y_r)\,dr}\\
& =\Et{\overline{Y_s}^\delta-Y_s-\int_t^s\frac{p(\overline{Y_r}^\delta-Y_r)}{T-\delta-r}\,dr+\int_t^s g^\delta(r,Y_r)-f(r,Y_r)\,dr}.
\end{align*}
By the solution formula for linear BSDEs:
\begin{equation*}
\overline{Y_t}^\delta-Y_t=\Et{(\overline{Y_s}^\delta-Y_s)\exp\left(-\int_t^s\frac{p}{T-\delta-r}\,dr\right)+\int_t^s g^\delta(r,Y_r)-f(r,Y_r)\,dr}.
\end{equation*}
Therefore,
\begin{equation*}
\overline{Y_t}^\delta-Y_t\geq\Et{\bigg(\overline{Y_s}^\delta-\sup_{t\leq s\leq T-\delta} Y_s\bigg)\exp\left(-\int_t^s\frac{p}{T-\delta-r}\,dr\right)}.
\end{equation*}
Now, letting $s\rightarrow T-\delta$ this yields $\overline{Y_t}^\delta-Y_t\geq 0$ by Fatou's lemma. This completes the proof since
\[\overline{ Y_t}-Y_t=\lim_{\delta\rightarrow 0}\overline{Y_t}^\delta-Y_t\]
by the monotone convergence theorem.
\end{proof}

The upper estimate in \eqref{bsde-estimate} may be used to establish the analogue to Lemma~\ref{lemma-finite-costs}. Here we need to impose the following essentially boundedness assumption on the coefficients:
\begin{itemize}
	\item[(A5)] $\eta,\eta^{-1},(T-\cdot)^p\lambda\in L^\infty_{\mathcal F}(0,T;\mathbb R_+)$.
\end{itemize}
\begin{corollary}  \label{cor-remaining-costs}
Under assumption (A5), for any solution $(Y,Z)\in L_{\mathcal F}^2(\Omega;C([0,T^-];\mathbb R_+))\times L_\mathcal F^2(0,T^-;\mathbb R^n)$ to \eqref{BSDE} and any admissible control $\xi\in\mathcal A(t,x)$ with finite costs it holds that
\begin{equation} \label{vanishing-costs-non-markov}
	\lim_{s\rightarrow T} \Et{Y_s|X_s^\xi|^p}=0.
\end{equation}
\end{corollary}
\begin{proof}
	 From the upper estimate in~\eqref{bsde-estimate} and the tower property,
\[
	E\left[\left.Y_s |X_s^\xi|^p\right|\mathcal F_t\right]\leq E\left[\left.\frac{\int_s^T\{\eta_r+(T-r)^p\lambda_r\}\,dr}{(T-t)^p} |X_s^\xi |^p\right|\mathcal F_t\right].
\]
Taking the liquidation constraint into account, we obtain after an application of Jensen's inequality,
\begin{align*}
E\left[\left.Y_s |X_s^\xi|^p\right|\mathcal F_t\right]	&\leq  E\left[\left.\frac{\int_s^T\{\eta_r+(T-r)^p\lambda_r\}\,dr}{T-t}\int_s^T|\xi_r|^p\,dr\right|\mathcal F_t\right]\\
	&\leq C E\left[\left.\int_s^T|\xi_r|^p\,dr\right|\mathcal F_t\right] .
\end{align*}
Hence, letting $s\rightarrow T$, we conclude \eqref{vanishing-costs-non-markov} by the monotone convergence theorem, where it is used that $\xi\in L^p_{\mathcal F}(0,T;\mathbb R)$ for any control $\xi$ that has finite costs as $\eta$ is bounded away from zero under assumption (A5).
\end{proof}

   Finally, by a verification argument analogous to the one in Section~\ref{sec-verification} we obtain $V_t(x)=Y_t|x|^p$ for any nonnegative solution $(Y,Z)\in L_{\mathcal F}^2(\Omega;C([0,T^-];\mathbb R_+))\times L_\mathcal F^2(0,T^-;\mathbb R^n)$ and conclude: 

\begin{theorem}
  Under assumption (A5) uniqueness holds for problem~\eqref{BSDE} in the class of nonnegative solutions in $ L_{\mathcal F}^2(\Omega;C([0,T^-];\mathbb R_+))\times L_\mathcal F^2(0,T^-;\mathbb R^n)$.
\end{theorem}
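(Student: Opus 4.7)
The plan is to prove this by a verification argument that characterizes $Y_t|x|^p$ as the value function of the (non-Markovian) control problem, which forces $Y$ to be unique; uniqueness of $Z$ will then follow from the martingale representation theorem applied to the unique semimartingale decomposition of $Y$. So fix any nonnegative solution $(Y,Z)$ in the stated space. First I would define the candidate feedback control $\xi^*_s = (Y_s/\eta_s)^\beta X^*_s$ and argue, exactly as in Lemma~\ref{lemma-admissible} but with the lower a priori estimate from Proposition~\ref{a-priori-bound} replacing the one for $v$, that the resulting $(X^*_s)$ is monotone with $X^*_T=0$, so $\xi^*$ is admissible. I would also transfer the proof of Lemma~\ref{lemma-suboptimal} verbatim to reduce, without loss of generality, to admissible controls producing monotone portfolio processes.

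Next I would apply It\^o's formula to $Y_s|X^\xi_s|^p$ on $[t,s]$ for any admissible monotone control $\xi$ with finite costs. Using the BSDE \eqref{BSDE} and that $dX^\xi_r = -\xi_r\,dr$, a routine computation gives
\begin{align*}
Y_t|x|^p &= Y_s|X^\xi_s|^p + \int_t^s \bigl(\eta_r|\xi_r|^p + \lambda_r|X^\xi_r|^p\bigr)\,dr - \int_t^s |X^\xi_r|^p Z_r\,dW_r \\
&\quad - \int_t^s \Bigl\{\eta_r|\xi_r|^p - p Y_r|X^\xi_r|^{p-1}\mathrm{sgn}(X^\xi_r)\xi_r + \tfrac{Y_r^{\beta+1}}{\beta\eta_r^\beta}|X^\xi_r|^p\Bigr\}\,dr.
\end{align*}
Pointwise minimization in $\xi$ shows the last integrand is nonnegative, with equality iff $\xi=\xi^*$. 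Monotonicity bounds $|X^\xi_r|\leq|x|$, and since $Z\in L^2_{\mathcal F}(0,s;\mathbb R^n)$ for $s<T$, the stochastic integral is a true $\mathcal F$-martingale; taking $\mathcal F_t$-conditional expectation yields
\begin{equation*}
Y_t|x|^p \;\leq\; \Et{Y_s|X^\xi_s|^p + \int_t^s \bigl(\eta_r|\xi_r|^p + \lambda_r|X^\xi_r|^p\bigr)\,dr},
\end{equation*}
with equality for $\xi=\xi^*$.

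Now I would let $s\to T$. The residual-cost term vanishes by Corollary~\ref{cor-remaining-costs} (this is precisely where continuity of $\eta,\lambda$ enters), and the running-cost term increases to $J(t,x;\xi) := \Et{\int_t^T \eta_r|\xi_r|^p+\lambda_r|X^\xi_r|^p\,dr}$ by monotone convergence. Hence $Y_t|x|^p \leq J(t,x;\xi)$ for every admissible monotone $\xi$, and equality holds for $\xi^*$; the equality case also shows $\xi^*$ has finite costs, so Corollary~\ref{cor-remaining-costs} legitimately applies to $\xi^*$. Combined with the reduction to monotone controls, this identifies
\begin{equation*}
Y_t|x|^p \;=\; \essinf_{\xi\in\mathcal A(t,x)} J(t,x;\xi) \qquad \mathbb P\text{-a.s., for all } t\in[0,T).
\end{equation*}
The right-hand side depends only on the data $(\eta,\lambda)$, not on the chosen solution, so $Y_t$ is uniquely determined $\mathbb P$-a.s. for each $t$, and by path-continuity on $[0,T)$ the processes coincide indistinguishably. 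Uniqueness of $Z$ then follows: given two solutions $(Y,Z)$ and $(Y,Z')$ sharing the same $Y$, the difference $\int_0^\cdot (Z_r-Z'_r)\,dW_r$ is an a.s.\ zero continuous martingale, whence $Z=Z'$ in $L^2_{\mathcal F}(0,t;\mathbb R^n)$ for each $t<T$ by the It\^o isometry.

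The main obstacle I anticipate is purely technical: checking that the stochastic integral $\int_t^s |X^\xi_r|^p Z_r\,dW_r$ is a genuine martingale on $[t,s]$ for $s<T$ (handled by the uniform bound $|X^\xi_r|\leq|x|$ together with $Z\in L^2_{\mathcal F}(0,s)$), and, more delicately, justifying the passage $s\to T$ simultaneously in both the residual-cost term and the stochastic integral. The former relies crucially on Corollary~\ref{cor-remaining-costs}, which in turn required the continuity of $\eta$ and $\lambda$ at $T$; without that hypothesis, the a priori upper bound in \eqref{bsde-estimate} would not be tight enough to drive the residual cost to zero for controls that are merely integrable but not $L^\infty$ near the terminal time.
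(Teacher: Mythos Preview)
Your proposal is correct and follows essentially the same route the paper indicates: the paper's own proof is the single sentence ``by a verification argument analogous to the one in Section~\ref{sec-verification} we obtain $V_t(x)=Y_t|x|^p$ for any nonnegative solution,'' and you have faithfully spelled out those details---transferring Lemmas~\ref{lemma-admissible} and~\ref{lemma-suboptimal}, applying It\^o to $Y_s|X^\xi_s|^p$, using Corollary~\ref{cor-remaining-costs} to kill the residual term as $s\to T$, and then reading off uniqueness of $Y$ (and hence $Z$) from the identification with the value function. The only cosmetic point is that your sentence ``equality holds for $\xi^*$; the equality case also shows $\xi^*$ has finite costs'' is slightly out of order logically (you first get $Y_t|x|^p\geq$ running costs for $\xi^*$, deduce finite costs, \emph{then} invoke Corollary~\ref{cor-remaining-costs} to upgrade to equality), exactly as the paper does at the end of Section~\ref{sec-verification}; but you clearly have the right idea.
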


\begin{proof}
We may again restrict the argument without loss of generality \cite[Lemma~1.6]{AnkirchnerJeanblancKruse14} to monotone controls $\xi\in\mathcal A(t,x)$ with finite costs. 
By It\^o-Kunita formula \cite[Theorem~I.8.1]{Kunita84}, since $Y_t|x|^p$ solves the stochastic HJB equation, 
\[
	Y_t|x|^p\leq\Et{Y_{s}|X_{s}^\xi|^p+\int_t^{s}\eta_r|\xi_r|^p+\lambda_r |X_r^\xi|^p\,dr}, \qquad t\leq s<T.
\]
Letting $s\rightarrow T$, we conclude $Y_t|x|^p\leq V_t(x)$ by Corollary~\ref{cor-remaining-costs}. Thus, $Y_t\leq V_t(1)$. But $V_t(1)$ is characterized in \cite{AnkirchnerJeanblancKruse14} as the minimal nonnegative solution to~\eqref{BSDE}. Hence, $Y_t=V_t(1)$ is unique.
\end{proof}


\section{Conclusion}

In this paper we proposed a novel approach to establishing smooth solutions to stochastic optimal control problems with singular terminal state constraints in a Markovian framework. Under standard assumptions on the diffusion and cost coefficients we proved that there exists at most one continuous viscosity and hence strong/classical solution to the HJB equation. As a byproduct we obtained a uniqueness theorem in a non-Markovian framework that complements results in \cite{AnkirchnerJeanblancKruse14}. Our main contribution is the existence of a classical solution under boundedness and differentiability assumptions on the coefficiants. Existence of a viscosity solution is still open.
In its present form our comparison principle only applies to continuous sub- and supersolutions, and hence does not allow us to apply Perron's method to establish the existence of a viscosity solution. Our verification argument uses Krylov's generalized It\^o formula. As such it applies to strong and classical solutions. It is not hard, though, to extend the verification argument to viscosity solutions. 


\appendix
\counterwithin{theorem}{section}
\section{Appendix}

In this appendix we present a modification of the comparison result for viscosity solutions given in \cite{BarlesBuckdahnPardoux97}. The original statement~\cite[Theorem~3.5]{BarlesBuckdahnPardoux97} concerns the uniqueness of viscosity solutions to systems of semilinear parabolic equations. The related comparison result is mentioned in \cite[Remark~3.9]{BarlesBuckdahnPardoux97}. As suggested in~\cite[Remark~6.105]{PardouxRascanu14}, the present scalar formulation covers the case of a monotone (not necessarily Lipschitz continuous) nonlinearity $G:[0,T]\times\mathbb R^d\times \mathbb R\rightarrow \mathbb R$. Namely, we assume 
\begin{itemize}
	\item[(A6)] $G$ is continuous,
	\item[(A7)] $(u-v)(G(t,y,u)-G(t,y,v))\leq \mu(u-v)^2$ for $\mu\in\mathbb R$ uniform in $t\in[0,T]$, $y\in\mathbb R^d$, $u,v\in\mathbb R$,
\end{itemize}
and consider for given terminal value $g:\mathbb R^d\rightarrow \mathbb R$ the following parabolic problem
\begin{equation} \label{generic-problem}
	\left\{\begin{aligned}&{-\partial_tu}(t,y)-\mathcal L u(t,y)-G(t,y,u(t,y))=0, \quad & (t,y)\in[0,T)\times\mathbb R^d,&\\
&u(T,y)=g(y), & y\in\mathbb R^d.&
\end{aligned}\right.
\end{equation}
It is worth mentioning that no uniform continuity type assumptions are needed due to the lack of any gradient dependence of $G$.
\begin{theorem} \label{theorem-finite-comparison}
	Let the assumptions (A6) and (A7) hold, and let $\underline u,\overline u\in C_{poly}([0,T]\times\mathbb R^d)$ be a viscosity sub- and a viscosity supersolution to~\eqref{generic-problem}, respectively. Then, $\underline u\leq\overline u$ in $[0,T]\times\mathbb R^d$.
\end{theorem}

	The proof is as in~\cite{BarlesBuckdahnPardoux97}. The only modification needed is to linearize the difference $G(t,y,\underline u(t,y))-G(t,y,\overline u(t,y))$ in terms of 
\[
	l(t,y):=1_{\underline u(t,y)\neq\overline u(t,y)}\frac{G(t,y,\underline u(t,y))-G(t,y,\overline u(t,y))}{\underline u(t,y)-\overline u(t,y)}.
\]	
rather than estimating it by a Lipschitz property, cf.~\cite[p.78]{BarlesBuckdahnPardoux97}. The key lemma \cite[Lemma~3.7]{BarlesBuckdahnPardoux97} based on the theorem of sums and the doubling variable technique then reads:

\begin{lemma} \label{lemma-linearization}
	The difference $w:=\underline u-\overline u$ is a viscosity subsolution to the linear equation
\begin{equation} \label{linearized-equation}
	-\partial_t w(t,y)-\mathcal Lw(t,y)-l(t,y)w(t,y)=0, \qquad (t,y)\in[0,T)\times\mathbb R^d.
\end{equation}
\end{lemma}

	Since $l$ is by (A7) bounded above by $\mu$ the supersolution property of the function given in~\cite[Lemma~3.8]{BarlesBuckdahnPardoux97} carries over to equation~\eqref{linearized-equation} so that the rest of the proof matches again with the original reference.

\bibliographystyle{siam}
{\small
\bibliography{bib_diss}
}

\end{document}